\newtheorem{theorem}{Theorem}
\newtheorem{corollary}{Corollary}
\newtheorem{definition}{Definition}
\newtheorem{lemma}{Lemma}
\newtheorem{proposition}{Proposition}
\newcommand{\Bcal}{\mathcal{B}}
\newcommand{\Ucal}{\mathcal{U}}
\newcommand{\Dcal}{\mathcal{D}}
\newcommand{\Qcal}{\mathcal{Q}}
\newcommand{\R}{\mathbb{R}}
\DeclareMathOperator*{\argmax}{argmax}
\DeclareMathOperator*{\argmin}{argmin}
\begin{document}
\bibliographystyle{elsart-harv}
\title{Expressive mechanisms for equitable rent division on a budget\footnote{Thanks to Ariel Procaccia and seminar participants at the 2019 Economic Design and Algorithms in St.\! Petersburg Workshop and UT Austin for useful comments. An earlier version of this paper also studied computational complexity in equitable rent division with soft budgets. We present the complexity results now in a companion paper \citep{Velez-2019-Complexity}. All errors are my own.}}
\date{\today}

\author{Rodrigo A. Velez\thanks{
\href{mailto:rvelezca@tamu.edu}{rvelezca@tamu.edu}; \href{https://sites.google.com/site/rodrigoavelezswebpage/home}{https://sites.google.com/site/rodrigoavelezswebpage/home}} \\\small{\textit{Department of Economics, Texas A\&M University, College Station, TX 77843}}}
\maketitle

\begin{abstract}
We study the incentive properties of envy-free mechanisms for the allocation of rooms and payments of rent among financially constrained roommates. Each agent reports her values for rooms, her housing earmark (soft budget), and an index that reflects the difficulty the agent experiences from having to pay over this amount. Then an envy-free allocation for these reports is recommended. The complete information non-cooperative outcomes of each of these mechanisms are exactly the envy-free allocations with respect to true preferences \emph{if and only if} the admissible budget violation indices have a bound.

\begin{singlespace}

\medskip

\textit{JEL classification}:  C72, D63.
\medskip

\textit{Keywords}: budget constraints, equitable rent division, market design, algorithmic game theory,  no-envy, quasi-linear preferences.
\end{singlespace}
\end{abstract}

\section{Introduction}\label{Sec:Intro}

We study the problem of equitably allocating the rooms and payments of rent among roommates who lease a house. An allocation is envy-free if no agent prefers the allotment of any other agent to her own \citep{Foley-1967-YEE}. Envy-free allocations in our environment coincide with the competitive allocations from equal endowments, an intuitively equitable institution, and are always Pareto efficient \citep{Svensson-1983-Eca}. Our main objective is to study the incentive properties of a family of mechanisms that we introduce. These mechanisms operate as follows. Each agent reports a preference in a domain that captures the higher utility loss the agent has from having to pay above the amount she had earmarked for housing. Then, an envy-free allocation for the reports is recommended.


The current practice of equitable rent division, spearheaded by the not-for-profit fair allocation website \texttt{Spliddit.org}~\citep{GP14}, is based on the elicitation of values for each room. Technically, these values determine a quasi-linear preference profile for which the system calculates an envy-free allocation (see Sec.~\ref{Sec:the-problem} for details). This mechanism, and in general the mechanisms that elicit quasi-linear reports, have two appealing features. First, they can be deployed in practice. Agents' reports are finite dimensional. Moreover, there are polynomial algorithms to calculate an envy-free allocation for quasi-linear reports \citep[c.f.,][]{Gal-et-al-2016-JACM}.  
Second, agents manipulation when reports are required to be quasi-linear has a clear limit: If agents know each other well, the set of non-cooperative outcomes from their manipulation is exactly the set of envy-free allocations with respect to their true, potentially not quasi-linear, preferences \citep{Velez-2015-JET,Velez-2017-Survey}.\footnote{Incentives under incomplete info in our environment have been analyzed in the two-agent case in which the problem is equivalent to the allocation of a single object. With symmetric independent private values, there are efficient, individually rational, and incentive compatible mechanisms \citep{Cramton-et-al-1987-Eca}. This efficiency result breaks down for more general information structures \citep[see][for a survey]{Moldovanu2002-JITE}.}

Requiring agents report quasi-linear preferences has a downside, however. This domain of preferences requires that each agent be indifferent between increases of rent on bundles that are indifferent for her. This may be at odds with agents' financial constraints. Suppose that an agent has a housing earmark of \$500, and is indifferent between paying \$500 for a large room and \$300 for a small one. If the agent reports this preference, a quasi-linear based envy-free mechanism may assign the agent the big room paying \$700 while another agent receives the small room paying \$500.\footnote{This example requires that there be at least another room that receives the residual rent rebate.} In this situation, the agent may prefer the allotment with the small room to her own allotment. Indeed, receiving the large room and paying \$700 for it, requires the agent pays \$200 above the amount she has earmarked for housing. This may require the agent accumulates debt, or repurposes other funds that she had earmarked for entertainment, transportation, etc.

The restrictive nature of quasi-linear preferences is not a simple theoretical possibility. The following is a representative feedback submitted to Spliddit in December 2016 and reported by \citet{Procaccia-Velez-Yu-2018-AAAI}.

\begin{quotation}\it I've tried to use your fair calculator, but it did not work
in my case. In our situation, I am the guy with the most
tight [sic] budget. Unfortunately, your system does not
take into account the ‘maximum’ budget restrictions. I
was assigned an option that was too expensive for me,
so it did not help. Please advise if there is a way to use
the system, taking that kind of limitation into account.\end{quotation}

There is a clear desire of this frustrated user to inform the mechanism designer about his or her \emph{real} preferences, and that the mechanism designer use this information in a meaningful way. A natural response to this request is to ask agents for their maximum budget restrictions and determine it they are compatible with no-envy. This is actually computationally feasible when the underlying preferences are quasi-linear \citep{Procaccia-Velez-Yu-2018-AAAI}. This does not resolve the issue when no-envy and maximum budget constraints are not compatible, however.

One can think of different responses to this issue based on the principle of no-envy. For instance, one could proceed sequentially. Elicit valuations and maximal budget constraints and determine whether no-envy is compatible with the budget constraints. If the answer is negative, no-envy requires that at least an agent needs to violate her budget constraint. Thus, one  needs to somehow elicit how difficult it is for the agents to go over their budget and determine an allocation that is envy-free with respect to these preferences. Alternatively, one could proceed with a simultaneous-move mechanism. Ask agents upfront to report their budget constraints and to quantify how difficult it is to go over these budgets. Then, compute an envy-free allocation based on these preferences (one could give priority to allocations that do not violate the budget constraints).

There is an obvious trade off between these two approaches. A sequential mechanism introduces a layer of strategic complexity for the agents. Asking for richer reports upfront may also force more complex message spaces that are only necessary in some cases in the sequential mechanism. We believe that both approaches are reasonable and its study is valuable. In this paper we concentrate on the simultaneous-move approach. There are two reasons to do so. First, this is the first step that one would need to make the sequential mechanism operational. That is, one needs to determine a reasonable preference domain that captures the difficulty that agents have to go over their budget. Second, as part of our analysis, we  address the elicitation of reports in our mechanisms.  We believe  that we succeed in keeping their complexity in check and thus our mechanisms have practical value.

Our proposal is to extend the domain of admissible preferences in Spliddit's system or any other direct revelation mechanism that restricts reports to be quasi-linear.  In our domain, the \textit{budget-constrained quasi-linear preferences}, an agent's preference is determined by her valuation of rooms and two additional parameters: her housing earmark (soft budget) and an index that penalizes in a linear way the utility the agent gets from paying above this amount.\footnote{Budget constraints in rent division were first studied, in the related problem of partnership dissolution, by \citet{Nicolo-Velez-2016}. Our domain of preferences was first proposed by \citet{Velez-2017-Survey} and independently studied from an algorithmic perspective by \cite{Eshwar-et-al-2018-Arxiv}. \citet{Procaccia-Velez-Yu-2018-AAAI} introduced the equitable rent division problem with hard budgets and studied the complexity of determining whether there is an envy-free allocation that does not violate these budgets. When there is no such allocation, their algorithm resorts to recommend an allocation with respect to the stated quasi-linear values without taking into consideration the agents' preferences for paying over their budget.} Thus, the agent's utility from a given allotment is the difference between the value of the room assigned to her and the real value of the rent paid including the penalty for going over the budget if that is the case. Our scheme allows agents report more accurately their financial difficulties.


We proceed in two steps. In  Sec.~\ref{Sec:the-problem} we address the practical implementation of our mechanisms. We propose explicit intuitive elicitation schemes for our domain of admissible preferences. Additionally, we discuss the computational complexity of calculating an envy-free allocation in our domain. 

Then, in Sec.~\ref{Sec:Incentives} we analyze the incentive issues generated by our enlargement of the space of reports. Our benchmark is the complete information incentive property of the envy-free mechanisms with quasi-linear reports. That is, the set of non-cooperative outcomes of the complete information games induced by these mechanisms, are exactly the set of envy-free allocations with respect to true preferences. It turns out that if we do not impose a bound on the budget violation penalty, so agents can arbitrarily exaggerate their disutility of increases of rent over their budget, this property  is lost. More precisely, if there are at least three agents who can report arbitrarily high budget violation penalties, one can always construct an envy-free social choice function whose direct revelation game has Pareto inefficient non-cooperative outcomes (Proposition~\ref{Prop:negativeresult}). The good news is that if there is a bound on the value of the budget violation penalty that agents can report, the incentives property of quasi-linear mechanisms is restored for each envy-free mechanism with budget-constrained quasi-linear preferences (Theorem~\ref{Th:Th2}). Thus, even though strategic agents generically have the incentive to manipulate these mechanisms \citep[Proposition 6.4]{Velez-2017-Survey}, their equilibrium manipulations necessarily constrain each other and in the end no-envy with respect to true preferences and Pareto efficiency are always obtained.

Besides the particular lessons for the equitable rent division problem, our study lays the ground for the analysis of \textit{expressiveness} of a mechanism. As we elaborate in detail in Sec.~\ref{Sec:expressivness} and~\ref{Sec:Discussion}, a mechanism designer who constructs a direct revelation mechanism should look beyond the direct incentives of strategic agents who interact with the system. The designer should take into account the extent to which agents' reports can encode their real preferences. At the same time, the designer needs to keep incentives in check and computational complexity feasible.

The remainder of the paper is organized as follows. Sec.~\ref{Section-model} introduces the model. Sec.~\ref{Sec:the-problem} introduces the mechanisms we propose and discusses their practical implementation. Sec.~\ref{Sec:Incentives} presents our results and discusses the technical challenges involved in establishing them. Sec.~\ref{Sec:Discussion} discusses the extension of our results to an environment populated by both unconditionally truthful agents and strategic agents and concludes.

\section{Model}\label{Section-model}

A set of $n$ objects, $A$, that we refer to as rooms, is to be allocated among a set of $n$ agents $N\equiv\{1,...,n\}$. Generic rooms are $a,b,...$. Each agent is to receive a room and pay an amount of money for it. The generic allotment is $(r_a,a)\in\R\times A$. When $r_a\geq 0$ we interpret this as the amount of money the agent pays to receive room $a$. We allow for negative payments of rent, i.e., $r_a<0$. In this way we allow for alternative interpretations of our model, as the allocation of tasks and salary.\footnote{One can interpret budget constraints as self-reported subsistence wages.} Each agent has a continuous preference on her outcome space, i.e., a complete and transitive binary relation on $\R\times A$ that is represented by a continuous utility function. The generic utility function is $u_i$. We assume throughout that preferences satisfy the following two properties:\footnote{Our ordinal assumptions on preferences imply existence of continuous representations.} (1) \textit{money-monotonicity}, i.e., for each consumption bundle $(r_a,a)$ and each $\delta>0$, $u_i(r_a+\delta,a)<u_i(r_a,a)$; and
(2) \textit{compensation assumption}, i.e., for each pair of rooms $a$ and $b$, and each $r_a\in\R$, there is $t_b\in\R$ such that $u_i(r_a,a)=u_i(t_b,b)$. We denote the domain of these preferences by~$\Ucal$.

Individual payments should add up to  $m\in\R$, the house rent. An allocation of the rooms and rent is a pair $(r,\sigma)$ where $\sigma:N\rightarrow A$ is a bijection and $r\equiv(r_{a})_{a\in A}\in\R^A$ is such that $\sum_{a\in A}r_a=m$. The allotment assigned to agent $i$ by allocation $(r,\sigma)$ is $(r_{\sigma(i)},\sigma(i))$.  We denote the set of allocations by~$Z$, with generic element~$z$. With this notation, agent~$i$ receives~$z_i$ at~$z$. An allocation $z$ is \textit{envy-free for $u$} if no agent prefers the consumption of any other agent to her own at the allocation. That is, for each pair $\{i,j\}\subseteq N$, $u_i(z_i)\geq u_i(z_j)$. The set of these allocations is $F(u)$. An allocation $z\in Z$ is \textit{Pareto efficient for $u$} if there is no $z'\in Z$ that each agent weakly prefers to $z$ and at least one agent strictly prefers. In our environment, each envy-free allocation for~$u$ is Pareto efficient for~$u$ \citep{Svensson-1983-Eca}.

Let $\Dcal\subseteq \Ucal$. A social choice function (scf) defined on $\Dcal^N$ associates an allocation with each $u\in\Dcal^N$. The generic scf  is $g:\Dcal^N\rightarrow Z$. We say that $g$ is envy-free if for each $u\in\Dcal^N$, $g(u)\in F(u)$.

An scf $g:\Dcal^N\rightarrow Z$ induces a revelation mechanism in which each agent reports a preference in $\Dcal$ and the outcome is the allocation recommended by~$g$ for the reports. We denote this mechanism by $(N,\Dcal^N,g)$ and the induced complete information revelation game for preference profile $u\in\Ucal^N$ by $(N,\Dcal^N,g,u)$.\footnote{Note that we refer to $(N,\Dcal^N,g,u)$ as a revelation game, even though $u$ may not be in $\Dcal^N$.}

Our main results concern the non-cooperative outcomes of revelation mechanisms associated with envy-free scfs. It is well-known that these games may not have pure-strategy Nash equilibria \citep[see][]{Velez-2017-Survey}. This happens because individual incentives in these games lead agents to profiles of reports in which multiple allocations achieve the same level of utility as the allocation chosen by the scf. This generates a discontinuity of payoffs because the scf necessarily acts as a tie breaker on these allocations. An intuitive solution to this problem is to consider limit Nash equilibrium outcomes, i.e., allocations that are in the proximity of recommendations for reports that constitute almost Nash equilibria \citep[see][for an extensive discussion]{Velez-2017-Survey}. Essentially, these are outcomes in which agents can almost perfectly non-cooperatively coordinate by incurring a negligible sacrifice of their individual welfare. Formally, let $\Dcal\subseteq\Ucal$, $u\in\Ucal^N$, $g:\Dcal^N\rightarrow Z$, and $\varepsilon>0$. A profile $v\in\Dcal^N$ is an \textit{$\varepsilon$-equilibrium} of $(N,\Dcal^N,g,u)$ if no agent can gain more than $\varepsilon$ in utility by choosing a different action, i.e., for each $i\in N$ and each $v_i'\in\Dcal$, $u_i(g(v_{-i},v_i'))\leq u_i(g(v))+\varepsilon$. An allocation $z$ is a \textit{limit Nash equilibrium outcome} of $(N,\Dcal^N,g,u)$ if there is a sequence of its $\varepsilon$-equilibrium outcomes that converges to $z$ as $\varepsilon$ vanishes.

\section{Financial constraints in rent division}\label{Sec:the-problem}

\subsection{Representing financial constraints}\label{Sec:expressivness}

Agent~$i$'s preference on~$\R\times A$ is \textit{quasi-linear} if it is represented by a function
\[u_i(r_a,a)=v^i_a-r_a,\]
where~$(v_a^i)_{a\in A}\in\R^A$ are the values that agent~$i$ assigns to each room. Let us denote the domain of these preferences by~$\Qcal$.

The current practice of equitable rent division is based on the elicitation of quasi-linear reports and the computation of an envy-free allocation for these reports. 
These mechanisms have the following incentive property.


\begin{theorem}[\citealp{Velez-2017-Survey}]\label{Th:Th1}
Let  $g$ be an envy-free scf defined on $\Qcal^N$. Then, for each $u\in\Ucal^N$ the set of limit Nash equilibrium outcomes of $(N,\Qcal^N,g,u)$ is~$F(u)$.
\end{theorem}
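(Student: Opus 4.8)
The plan is to prove the two inclusions separately: \emph{soundness}, every limit equilibrium outcome of $(N,\Qcal^N,g,u)$ belongs to $F(u)$; and \emph{completeness}, every $z\in F(u)$ is a limit equilibrium outcome. Throughout I would rely on the lattice and monotonicity structure of the quasi-linear envy-free correspondence from \citet{A-D-G-1991-Eca} and \citet{Velez-2016-Rent}, together with the facts that an envy-free allocation of a quasi-linear economy is welfare-maximal and that its assignment can be freely replaced by any welfare-equivalent one \citep{Svensson-2009-ET}.

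For completeness, fix $z=(r,\sigma)\in F(u)$. The idea is to have all agents report \emph{almost identical} quasi-linear valuations that support $z$. Since each $u_i$ is money-monotone and satisfies the compensation assumption, there are unique $\tau_{ia}$ with $u_i(\tau_{ia},a)=u_i(z_i)$, and $z\in F(u)$ together with money-monotonicity gives $\tau_{ia}\le r_a$, with equality at $a=\sigma(i)$. Fix a constant $c$ and let agent $i$ report the quasi-linear preference with values $w^{i,\varepsilon}_{\sigma(i)}=r_{\sigma(i)}+c$ and $w^{i,\varepsilon}_a=r_a+c-\delta_{ia}$ for $a\neq\sigma(i)$, where the $\delta_{ia}>0$ are of size $O(\varepsilon)$ and chosen generically so that $\sigma$ uniquely maximizes aggregate value; call the resulting profile $v^\varepsilon$. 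At prices $r$ every agent strictly prefers her own room, so $z\in F(v^\varepsilon)$; and the no-envy inequalities force $|p_a-p_b-(r_a-r_b)|=O(\varepsilon)$ while $\sum_ap_a=m$, so $\operatorname{diam}F(v^\varepsilon)=O(\varepsilon)$ and hence $g(v^\varepsilon)\to z$. I would then check that $v^\varepsilon$ is an $O(\varepsilon)$-equilibrium: if agent $i$ deviates to $v_i'\in\Qcal$, then in any $(p',\sigma')\in F(v^\varepsilon_{-i},v_i')$ the remaining $n-1$ agents are, up to $O(\varepsilon)$, mutually indifferent and do not envy $i$; combining this with $\sum_ap'_a=m$ forces $p'_{\sigma'(i)}\ge r_{\sigma'(i)}-O(\varepsilon)$, so continuity of $u_i$ and $z\in F(u)$ give $u_i(p'_{\sigma'(i)},\sigma'(i))\le u_i(z_i)+O(\varepsilon)$, whereas $u_i(g(v^\varepsilon)_i)\ge u_i(z_i)-O(\varepsilon)$. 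Letting $\varepsilon\to0$ exhibits $z$ as a limit equilibrium outcome. (Exact Nash equilibria need not exist here because $g$ may break ties among welfare-equivalent allocations adversarially, which is precisely why the limit notion is the appropriate one.)

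For soundness, let $v\in\Qcal^N$ be an $\varepsilon$-equilibrium with $(r,\sigma)\equiv g(v)\in F(v)$, and fix $i,j$. Holding $v_{-i}$ fixed, I would use the ADG lattice structure to define, for each room $a$, the smallest price $q_a$ that agent $i$ can secure for room $a$ across all reports $v_i'$, and observe that $q_a$ is essentially attained by a \emph{minimal-gap} report — one that just barely makes room $a$ agent $i$'s demand — which moreover makes the resulting envy-free set of $O(\varepsilon)$ diameter, so $g$'s selection cannot deprive agent $i$ of the bundle $(q_a,a)$ beyond $O(\varepsilon)$. Because $(r,\sigma)\in F(v)$, nobody envies anybody at $r$, and chasing the binding no-envy inequalities along the path/cycle opened by the deviation gives $q_{\sigma(i)}\le r_{\sigma(i)}-\theta_1$ and $q_{\sigma(j)}\le r_{\sigma(j)}+\theta_2$ for nonnegative $\theta_1,\theta_2$ built from the reported envy-slacks of the displaced agents. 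Applying the $\varepsilon$-equilibrium inequality to the deviation realizing $q_{\sigma(i)}$ forces those reported slacks — hence $\theta_1$ and $\theta_2$ — to be $\le\psi(\varepsilon)$ with $\psi(\varepsilon)\to0$; applying it to the deviation realizing $q_{\sigma(j)}$ then yields $u_i(r_{\sigma(j)},\sigma(j))\le u_i(q_{\sigma(j)},\sigma(j))+\omega_i(\theta_2)\le u_i(r_{\sigma(i)},\sigma(i))+\varepsilon+\omega_i(\psi(\varepsilon))$, with $\omega_i$ a modulus of continuity of $u_i$. Running this along a sequence of $\varepsilon$-equilibria whose outcomes converge to a limit point $z$ gives $u_i(z_i)\ge u_i(z_j)$ for all $i,j$, i.e., $z\in F(u)$.

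I expect the main obstacle to be neutralizing $g$'s tie-breaking in both directions: in completeness, simultaneously keeping $F(v^\varepsilon)$ of $O(\varepsilon)$ diameter \emph{and} the reports close enough to the truthful supports that the profile stays a near-equilibrium; in soundness, engineering the deviating report so the post-deviation envy-free set is small enough that $g$ must hand agent $i$ a bundle of essentially minimal price $(q_a,a)$. The second nontrivial ingredient is the bookkeeping of envy-slacks along the displacement cascade in the general-$n$ soundness estimate, which is exactly where the lattice and monotonicity results of \citet{A-D-G-1991-Eca,Velez-2016-Rent} and the welfare-equivalence of envy-free assignments \citep{Svensson-2009-ET} carry the argument.
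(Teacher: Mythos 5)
Your decomposition and both halves coincide with the paper's own route: Theorem~\ref{Th:Th1} is obtained there as the special case $\Dcal=\Qcal$ of Lemma~\ref{LMem:converse-Th2} (every $z\in F(u)$ is a limit equilibrium outcome) and Lemma~\ref{LM:goTHM2} (every limit equilibrium outcome is in $F(u)$). Your completeness half is essentially the paper's proof of Lemma~\ref{LMem:converse-Th2}: the profile $v^\varepsilon$ there gives each agent an own-room premium of order $\varepsilon/(n-1)$, which pins every $s\in F(v^\varepsilon)$ within $O(\varepsilon)$ of $z$; the paper draws the deviation bound from the maximal-manipulation results of \citet{Fujinaka-Wakayama-2015-JET} rather than from your direct budget-balance computation, but both work.

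The soundness half is where your sketch conceals the real work. The step ``the $\varepsilon$-equilibrium inequality applied to the rebate deviation forces the slacks, hence $\theta_1$ and $\theta_2$, to be $\le\psi(\varepsilon)$'' needs a quantitative link between the two quantities that is \emph{uniform over} $v_{-i}$, since $v^\varepsilon_{-i}$ changes along the sequence: one must show that the markup $\theta_2$ required to capture room $\sigma(j)$ is at most a constant multiple, depending only on $n$, of the achievable rebate $\theta_1$ on room $\sigma(i)$. That is precisely the content of the Strong Manipulation Theorem (Theorem~\ref{lemma-Q-condition-LimitE}) and its engine, Lemma~\ref{lem:kappa}: the Case-2 cascade there transfers a total rent $\Delta$ out of the envy-graph component containing $i$, lowers $r_{\sigma(i)}$ by at least $\theta\Delta$ with $\theta$ universal ($\theta=1/n$ when reports are quasi-linear), and raises $r_{\sigma(j)}$ by at most $\Delta$. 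Note also that the rebate deviation does not control the individual reported slacks, only an aggregate of them determined by the no-envy constraints --- slacks of agents far from $i$ in the envy graph can remain large --- so this flow argument cannot be bypassed. Two smaller imprecisions: the post-deviation envy-free set need not have $O(\varepsilon)$ diameter; what Lemma~\ref{Lemma-Manip-zinFu-i} actually delivers is that agent $i$'s \emph{allotment} is pinned down, up to $\delta$, across every selection from $F(v_{-i},v_i')$. And the case $j\rightarrow_{v,z}i$ should be split off, as in Case~1 of the paper's proof, since there the markup is zero and no cascade is needed.
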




As described in the introduction, direct feedback by Spliddit users points to the need to allow agents report their financial constraints and to use this information to select a recommendation. Doing this seems a natural reaction by a mechanism designer to these requests. As we argue below, this is somehow at odds with the paradigm of analysis in classical mechanism design, however. Thus, it is granted that we explore the normative foundation of this endeavor.

The standard analysis of a mechanism consists on determining whether its individual incentives are aligned with social objectives. It is  irrelevant whether the users of a system need to directly report their private information, or an abstract message that allows them to achieve the necessary coordination. Indeed, whenever a revelation principle applies to the environment of interest, the restriction to direct revelation mechanisms is considered only an analytical shortcut.

In reality, mechanisms in which users directly provide, at least partially, their private information, have special practical value. ``Direct'' reports have a clear meaning for users. They are usually answers to questions for which there is a true answer that the user knows.  These reports can also be tied to the normative goals the mechanism designer targets and that may incentivize agents to use the system \citep[see][for a related discussion]{Procaccia-2019-FED}. Spliddit, for instance, asks each agent to report room values by distributing the rent of the house among the different rooms. Even if the agent's preferences are not quasi-linear, this is equivalent to asking for the agent's true ``indifference curve'' composed of bundles whose rent add up to the house rent (this is uniquely defined for very weak requirements on preferences).  Spliddit also provides a plain words explanation of no-envy, and details in its demo how a user who truthfully answers the queries of the system should receive an allotment that is no worse than that of the other roommates.

There is evidence that when agents are asked for their private information, some may just report it truthfully, independently of the incentives not to do so.\footnote{For instance in the ``mind games'' of \citet{KAJACKAITE-2017-GEB}, a participant is asked to choose a number in private, then roll a dice, and report if the choice and the dice coincided. The participant receives a payoff $\$x>0$ when she reports that she ``predicted'' the outcome of the dice, and $0$ otherwise. For payoffs of \$50, \citet{KAJACKAITE-2017-GEB} report success rates just below $50\%$, which is inconsistent with the law of large numbers, but is still significantly below $100\%$.} Thus, a mechanism designer who is taking advantage of the natural appeal of a direct revelation mechanism should not only align incentives with social objectives, but also be mindful of the possibility that some users may be truthful. More precisely, we should not be content only with guaranteeing that agents can only coordinate on socially optimal allocations. For instance, we know this is so for Spliddit. Indeed, if the agent raising the complaint that \citet{Procaccia-Velez-Yu-2018-AAAI} report (see Sec.~\ref{Sec:Intro}) is truly frustrated because he or she finds that some other agent is receiving a better deal, there must be an alternative quasi-linear report that would allow the agent to obtain a comparable deal \citep{Andersson-Ehlers-Svensson-2014-TE,Andersson-Ehlers-Svensson-2014-MSSc,Fujinaka-Wakayama-2015-JET,Velez-2017-Survey}. The fact that this is so should not restrain us from taking the concern of this agent at face value.

One can think of different ways in which a mechanism designer can achieve this.\footnote{For instance some authors have advocated for the use of strategy-proof direct revelation mechanisms to protect the less sophisticated users who may be unconditionally truthful \citep{Pathak-Sonmez-AER-2008}. Unfortunately, dominant strategy implementation is not available in our environment \citep{A-D-G-1991-Eca,Tadenuma-Thomson-1995-GEB,Velez-2017-Survey}.} For our equitable rent division problem, the mechanism designer aims to obtain an envy-free allocation, i.e., is implicitly promising the agents that if they report their private information, they will not prefer the allotment of the other agents to their own. Thus, a natural minimal requirement is that this objective be achieved at the individual level for truthful agents. That is, the mechanism designer should ask agents report detailed enough private information, so when a given agent reports truthfully, the system indeed delivers an allocation in which this given agent does not prefer the allotment of another agent to her own.

This ideal is probably impossible, however. Elicitation of arbitrary preferences is not practical. Even if one were able to elicit these preferences, complexity of computation of envy-free allocations would need to be resolved \citep{Eshwar-et-al-2018-Arxiv}. Finally, the incentive property of these mechanisms would be lost (see Sec.~\ref{Sec:Incentives} for an analysis of this fundamental issue). Thus, instead of aiming to provide infinite flexibility of reports, we propose to address the plausible phenomena for which there has been explicit requests. In particular, we extend the admissible preferences in a direct mechanism to the following domain.

\begin{definition}\label{Def:Bcal}Let $R\subseteq\R_+$ be a set containing~$0$.\footnote{Our requirement that $0\in R$ guarantees that $\Qcal\subseteq\Bcal(R)$.} Then, $u_i\in \Ucal$ belongs to $\Bcal(R)$ if there are $(v^i_a)_{a\in A}\in\R^A$, $b_i\geq 0$, and $\rho_i\in R$ such that
\[u_i(r_a,a)=v^i_a-r_a-\rho_i\max\{0,r_a-b_i\}.\]
\end{definition}

A preference in our domain has an intuitive simple structure. The agent has an underlying quasi-linear preference that is modified by the existence of an amount $b_i$ that determines the marginal disutility of paying rent. If rent is below~$b_i$, the  agent's utility is simply its quasi-linear value. If rent is above~$b_i$, the agent's utility is its quasi-linear value minus a penalty determined as a linear function of the excess payment. This domain admits two compatible interpretations. First, one can think of~$b_i$ as a budget constraint and~$\rho_i$ as the interest rate at which the agent has access to credit, which is necessary to go over the budget constraint. Under this interpretation, agents' utility is simply the present value of their consumption.\footnote{It is common that students, who traditionally collectively lease houses an apartments, accumulate debt to pay for their basic expenses including housing.} An alternative interpretation is that~$b_i$ is a housing earmark, i.e., the agent has a hard budget $B_i>b_i$ and has an expectation to pay at most~$b_i$ for housing given all other expenses she has to cover. Going over her earmark for housing entails decreasing her expense for other needs, which could be done, but entails an additional sacrifice that is quantified by coefficient~$\rho_i$.

\subsection{Practical considerations of a domain enlargement}

We are specially concerned about the possible deployment of our mechanisms. A first practical issue that arises is the extent to which one can elicit the relevant information to determine a preference in $\Bcal(R)$ with intuitive questions for which there is a true answer. One option is to proceed as if the agent had a preference in this domain. As Spliddit, one can ask the agent to assign a rent to each room in a way that the total rent is collected and the agent is indifferent between receiving each room with the corresponding rent. Then ask for their budget constraint. Finally, elicit the budget violation index whenever needed. There are three cases.

1. The agent is budget constrained and the individual rents assigned to the rooms are all weakly below the agent's budget: In this case there is no need to ask any further question. If one proceeds calculating an envy-free allocation for a quasi-linear preference with the reported values, the agent will always be assigned a room whose rent is below the budget constraint.

2. The agent assigns different rents to at least two rooms and at least one rent is above the budget: Let $a$ and $b$ be rooms the agent assigns the highest and lowest rents, $r_a>r_b$.  If $r_b<b_i$, one can ask the agent for the equivalent rebate in $(r_b,b)$ of a rebate of one dollar in $(r_a,a)$. That is, we ask what $\tau$ makes the agent indifferent between $(r_b-\tau,b)$ and $(r_a-1,a)$. To enforce reported preferences are in our admissible domain, we can give the agent an option set $\tau\in\{1+\rho_i:\rho_i\in R\}$. If $r_b\geq b_i$, one can similarly ask the agent for the equivalent with room $b$ of a rebate of $r_b-b_i+1$ dollars for bundle $(r_a,a)$.

3. The agent assigns equal rents to each room (and all above budget): In this case the agent has quasi-linear preferences represented by equal values. If we are to calculate an ordinal selection from the envy-free set, as the one that maximizes the minimal payment of rent, we do not need to inquire for the budget violation index. If we are to calculate a cardinal selection, we need to ask the agent to asses how difficult is for her to go over the budget compared to the roommates, or to a certain population. Then, use the corresponding statistic calculated from the population or an available sample.\footnote{We could also use a lottery to elicit the index violation index assuming risk neutrality, or also elicit risk preferences and adjust accordingly. These are perhaps overcomplicated schemes that probably induce more noise in the reports than provide useful information.}

Alternatively, one does not have to interpret this domain of preferences as representing the agent's complete preference map.   In the same way that quasi-linear utilities are only an approximation of an agent's true preferences, one can think of offering a coarse ranking of an agent's financial difficulties in a finite scale, say low, medium, and high.\footnote{One can also think of eliciting self-reported financial default rating indices, e.g., the agent's credit score; or the interest rate at which the agent has access to credit.} Then, map these reports to values of budget violation indices that are calibrated to increase performance of the system in either an experimental environment or by means of surveys and empirical data (recall that no-envy can be evaluated ex-post with respect to true preferences).

A second practical issue that one should be concerned with when considering enlarging the space of reports from $\Qcal$ to $\Bcal(R)$ is the complexity of calculating an envy-free allocation. Fortunately, due to the central place of the equitable rent division problem in the literature on computational fair division, this issue has received attention in recent theoretical computer science literature. There is a polynomial algorithm to calculate an envy-free allocation when preferences belong to~$\Bcal(R)$ with finite~$R$ \citep{Eshwar-et-al-2018-Arxiv}. This algorithm produces a ``random'' envy-free allocation, i.e. does not allow the designer to target a specific selection from the envy-free set. Not all envy-free allocations are intuitively equitable, however \citep{Tadenuma-and-Thomson-1995-TD,Gal-et-al-2016-JACM}. Thus, one would also like to be able to calculate a particular selection from the envy-free set. As a response to this, in a companion paper, we developed a polynomial algorithm to calculate a maxmin utility envy-free allocation (Spliddit's current choice), and other prominent selections from the envy-free set, also for each~$\Bcal(R)$ with finite~$R$  \citep{Velez-2019-Complexity}.

\section{The incentives of soft budgets}\label{Sec:Incentives}

Incentive properties of allocation mechanisms are important to assess their potential practical value. In the case of the revelation mechanism of an envy-free scf defined in $\Qcal$,  Theorem~\ref{Th:Th1} reveals that even though agents may not report their true rankings (e.g., their true indifference set in Spliddit's questionnaire), if they are strategic, their misreports cancel out and in the end only envy-free allocations for true preferences can be obtained.

Strikingly, the assumption that the revelation game in Theorem~\ref{Th:Th1} constrains reports to be quasi-linear cannot be dropped altogether (when there are at least three agents).\footnote{When there are only two agents, the direct revelation game of any envy-free scf produces only limit Nash equilibrium outcomes that are envy-free with respect to true preferences. The trivial combinatorics in this case allows one to prove this result directly from Intermediate Value Theorem arguments.} \citet{Velez-2015-JET} proves this by constructing an scf $g$ defined in $\Ucal^N$ and $u\in\Ucal^N$ for which there are limit Nash equilibrium outcomes of $(N,\Ucal^N,g,u)$ that are neither envy-free, nor Pareto efficient for $u$. Intuitively, the example in \citet{Velez-2015-JET} exploits that agents can arbitrarily exaggerate their sensitivity to changes in rent when preferences are unrestricted in~$\Ucal$. This does not settle the issue whether the same phenomenon occurs when we only admit budget-constrained quasi-linear preferences. The sequence of profiles in this example does not belong to $\Bcal(R)$ for any $R$.  However, it is possible to construct an example with the same features  for each admissible domain of preferences $\Bcal(R)$ with unbounded $R$.

\begin{proposition}\label{Prop:negativeresult}Suppose that $n\geq 3$. There is an envy-free $g$ defined on $\Bcal(\R_+)$ for which there is $u\in \Qcal^N$ such that for each unbounded $R$, there is a limit Nash equilibrium of $(N,\Bcal(R)^N,g,u)$ that is not Pareto efficient for $u$.
\end{proposition}

Proposition~\ref{Prop:negativeresult} reveals that enlarging the domain of preferences to account for soft budget constraints in a direct revelation mechanism that forces reports to be quasi-linear, may not be innocuous in what concerns the incentive properties of the system. Indeed, if agents can arbitrarily exaggerate their sensitivity to increases in rent above their budget, they can end up non-cooperatively coordinating on a Pareto inefficient allocation (which is never envy-free).

The following theorem, our main result, identifies boundedness of $R$ as a general condition under which our proposal preserves the incentive property of envy-free quasi-linear mechanisms.

\begin{theorem}\label{Th:Th2} Let  $g$ be an envy-free scf defined on~$\Bcal^N(R)$. If $R$ is bounded, then for each~$u\in\Ucal^N$ the set of limit Nash equilibrium outcomes of $(N,\Bcal^N(R),g,u)$  is~$F(u)$.
\end{theorem}

Proposition~\ref{Prop:negativeresult} and Theorem~\ref{Th:Th2} imply that limiting the extent to which each agent can exaggerate their sensitivity to increases of rent above their budget is \textit{necessary and sufficient} to guarantee that the non-cooperative outcomes of each direct revelation envy-free  mechanism so defined are exactly the envy-free allocations for true preferences.

\begin{corollary}Suppose that $n\geq 3$ and let $0\in R\subseteq\R_+$. For each envy-free scf $g$ defined on $\Bcal(R)^N$ and  each $u\in\Ucal^N$ the set of limit Nash equilibrium outcomes of $(N,\Bcal(R)^N,g,u)$ is $F(u)$ if and only if $R$ is bounded.
\end{corollary}

One can see Theorem~\ref{Th:Th2} as the conjunction of two independent statements: (i) when reports are constrained to be in $\Bcal(R)$ for bounded $R$, each limit Nash equilibrium outcome of the direct revelation game of an envy-free scf is envy-free for the true preferences; (ii) each envy-free allocation for true preferences is a limit Nash equilibrium outcome of the direct revelation game of each envy-free scf defined in $\Bcal(R)^N$. We will concentrate on providing intuition for (i), which requires a novel analysis of incentives in the equitable rent division problem. The intuition why (ii) holds, heavily borrows from that behind the known proof of Theorem~\ref{Th:Th1}.\footnote{Essentially, since we assume that $0\in R$, we have that $\Qcal\subseteq \Bcal(R)$. It is known what the limits of manipulations for an agent in an envy-free revelation mechanism are, and that these limits are achieved by quasi-linear reports \citep{Andersson-Ehlers-Svensson-2014-TE,Andersson-Ehlers-Svensson-2014-MSSc,Fujinaka-Wakayama-2015-JET,Velez-2017-Survey}. Thus, the construction that allows one to show that a given envy-free allocation for true preferences is a limit Nash equilibrium outcome of the game of a mechanism that restricts reports to be quasi-linear, also sustains the allocation as a limit Nash equilibrium outcome of any enlargement of the admissible domain of preferences (see
Lemma~\ref{LMem:converse-Th2}).}

The reason why extending the set of reports in an envy-free mechanism to include all continuous preferences can create unwanted limit Nash equilibrium outcomes in a game, is the same reason why proving Theorem~\ref{Th:Th2} is a somehow subtle problem: A limit Nash equilibrium outcome is not the precise recommendation for a given report; it is the limit of recommendations for a sequence of reports that can be increasingly apart from each other as the outcome accumulates towards a limit. Thus, to understand the structure of the limit Nash equilibrium outcomes, it is not enough to determine what are the situations in which an agent can gain by changing her report considering the reports of the other agents are fixed. One needs to take into account the possibility that for each agent, the sequence of reports of the other agents can make the gains that the agent can grab by changing her report vanish.

Our characterization of limit Nash equilibrium outcomes can be divided in two steps.  If a given allocation is a limit Nash equilibrium outcome, a sequence of reports that sustains it as such necessarily has associated outcomes that are  eventually in the proximity of the allocation. Thus, our first step is to gain some control on the structure of the sequences of reports that could possibly sustain an arbitrary outcome. Our second step is to \emph{quantify} the possibilities to manipulate an envy-free mechanism that arise because the profiles satisfy these properties.

When reports are necessarily quasi-linear, \citet{Velez-2015-JET} implicitly advances these two tasks based on the limits of manipulation for an agent identified by \citet{Andersson-Ehlers-Svensson-2014-TE,Andersson-Ehlers-Svensson-2014-MSSc} and \citet{Fujinaka-Wakayama-2015-JET}.  The characteristic of problems in this domain that makes this possible is that no-envy among some agents is preserved when one increases their individual payments of rent by the same amount. Essentially, an agent, say agent~$i$, can manipulate by forcing a reshuffle of rooms and a reassessment of rent by changing her report. In the crucial cases of analysis, this involves agent~$i$ forces a group of agents~$M\subseteq N\setminus\{i\}$ to pay more rent, say~$\Delta$ in aggregate, and distributes these gains among $N\setminus M$.  If preferences are quasi-linear, this can be done simply by rebating~$\Delta$ in equal parts among the agents in $N\setminus M$. Thus, agent~$i$ secures a utility gain of at least $\Delta/n$. Consequently, the structure of preferences allows one to obtain a lower bound on the gain of agent~$i$ that is a linear function of the aggregate rent that agent~$i$ can extract from other agents. It is intuitive that a result like this may be true when preferences are in $\Bcal(R)$ for bounded $R$, because in this case the agents' marginal utility of money has a bound. This is not a consequence of an immediate computation, however. Indeed, one quickly realizes that a naive attempt to replicate the analysis with quasi-linear reports leads to challenging combinatorial problems. This forces us to advance a subtler approach that is based on better understanding the shape of the envy-free set in our domain as the aggregate rent to pay changes. In particular, in Lemma~\ref{lem:kappa} in the Appendix, we state the following key result. The scf that selects an allocation with maximal rent for a given room, say room~$a$, within the envy-free  set, has the following property. There exists a universal constant~$\theta$,  i.e.,  it depends only on $n$ and the supremmum of $R$,  such that if the rent decreases in aggregate by $\Delta$, the rent of room $a$ decreases at least by $\theta \Delta$.

Lemma~\ref{lem:kappa} allows us to quantify the possible gain an agent has from manipulating an envy-free revelation mechanism when the admissible domain of preferences is $\Bcal(R)$ with bounded $R$. Intuitively, if an agent, say agent~$i$, can extract an amount of money $\Delta$ from a set of agents $M\subseteq N\setminus\{i\}$, then agent~$i$ can distribute these gains among $N\setminus M$ guaranteeing that she receives an allocation that is no worse than a rebate of $\theta \Delta$ on her own room. We then determine that the possible gain of an agent at a particular profile of reports in an envy-free revelation mechanism when the admissible domain of preferences is $\Bcal(R)$ for bounded $R$, is bounded below by a linear function of the difference in utility between the agent's consumption and the consumption of the other agents (Theorem~\ref{lemma-Q-condition-LimitE}). More precisely, consider an envy-free revelation mechanism and a given profile of reports for which agent~$i$, with her true preference, envies another agent, say agent~$j$. Let $\eta>0$ be the additional rent that would have to be added to the rent of the room of agent~$j$ so agent~$i$ is indifferent between this allotment and her consumption. There are universal proportions $\omega_1$ and $\omega_2$ for which the following holds. There is always a quasi-linear preference that agent~$i$ can report and guarantees that as long as an envy-free allocation is chosen for the preference report, her allotment will be at least as good as the worst of a rent rebate in her room of at least $\omega_1\eta$, or receiving the room of agent $j$ but paying at most what~$j$ was paying plus~$\omega_2\eta$.

\begin{theorem}[Strong Manipulation]\label{lemma-Q-condition-LimitE}Let $0\in R\subseteq\R_+$ be bounded. There are $\{\omega_1,\omega_2\}\subseteq(0,1)$ such that for each $v\in\Bcal^N(R)$, each $z\equiv (r,\mu)\in F(v)$, each $i\in N$, each $u_i\in \Ucal$, and each $j\in N$ such that $u_i(z_j)-u_i(z_i)>0$, there is $v'_i\in\Qcal$ such that, for each $s\in F(v_{-i},v'_i)$,
\[u_i(s_i)\geq \min\left\{u_i\left(r_{\mu(i)}-\omega_1\eta,\mu(i)\right),u_i\left(r_{\mu(j)}+\omega_2\eta,\mu(j)\right)\right\},\]
where $\eta>0$ is such that $u_i(r_{\mu(j)}+\eta,\mu(j))=u_i(z_i)$.\end{theorem}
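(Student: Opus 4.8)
The plan is to derive the statement from \citet{Velez-2017-Survey}'s strengthened Maximal Manipulation Theorem (a sharpening of \citealp{Andersson-Ehlers-Svensson-2014-TE,Andersson-Ehlers-Svensson-2014-MSSc,Fujinaka-Wakayama-2015-JET}). Fix the reports $v_{-i}$ of the remaining agents and call $(x,a)\in\R\times A$ \emph{$i$-attainable} if there is an allocation $(p,\pi)\in Z$ with $\pi(i)=a$, $p_a=x$, at which no agent in $N\setminus i$ envies another agent in $N\setminus i$ and none of them weakly prefers $(x,a)$ to her own allotment; for each room $a$ let $x^{*}_a$ be the infimum of the (nonempty, bounded below) set of prices $x$ with $(x,a)$ $i$-attainable. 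The theorem identifies $\sup_{v'_i\in\Qcal}\inf_{s\in F(v_{-i},v'_i)}u_i(s_i)$; in particular that supremum is at least $u_i(x^{*}_a,a)$ for each room $a$, and --- the feature that lets the conclusion dispense with an $\varepsilon$ --- for every number strictly below it there is a report $v'_i\in\Qcal$ \emph{all} of whose envy-free allocations leave agent $i$ with utility at least that number. It therefore suffices to produce $\omega_1,\omega_2\in(0,1)$, depending only on $n$ and $\overline{\rho}$, such that, with $a_i\equiv\mu(i)$ and $a_j\equiv\mu(j)$, at least one of $x^{*}_{a_i}\le r_{a_i}-\omega_1\eta$ or $x^{*}_{a_j}\le r_{a_j}+\omega_2\eta$ holds: by money-monotonicity of $u_i$ this yields $\max_a u_i(x^{*}_a,a)\ge\min\{u_i(r_{a_i}-\omega_1\eta,a_i),u_i(r_{a_j}+\omega_2\eta,a_j)\}$, and the strictness that the theorem needs is recovered by harmlessly shrinking the constants.

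For this price estimate I would split on $\delta^{*}\equiv r_{a_i}-x^{*}_{a_i}\ge 0$, the largest rent cut that room $a_i$ can absorb while the sub-market on $N\setminus i$ stays internally envy-free and not envious of $i$. If $\delta^{*}\ge\omega_1\eta$ the first alternative holds. If $\delta^{*}<\omega_1\eta$, the point is that relocating $i$ to room $a_j$ is then forced to be cheap, $x^{*}_{a_j}\le r_{a_j}+c(n,\overline{\rho})\,\delta^{*}$ for some $c(n,\overline{\rho})\ge 1$, so it is enough to pick $\omega_1<1/c(n,\overline{\rho})$ and $\omega_2\equiv c(n,\overline{\rho})\,\omega_1$, both in $(0,1)$. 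The two-agent case is the transparent model: there $x^{*}_{a_i}$ and $x^{*}_{a_j}$ are both fixed by agent $j$'s indifference price between the two rooms and satisfy $x^{*}_{a_i}+x^{*}_{a_j}=m=r_{a_i}+r_{a_j}$, whence $r_{a_i}-x^{*}_{a_i}=x^{*}_{a_j}-r_{a_j}$ and the split is exact with $c=1$. For $n>2$ this conservation is lost and one has to follow how the rent freed on $a_i$ is redistributed over the other $n-1$ rooms: I would pass to the minimal envy-free price vector of the sub-economy on $N\setminus i$ --- which exists and depends monotonically and continuously on that sub-economy's rent \citep{A-D-G-1991-Eca,Velez-2016-Rent} --- and re-equilibrate from the configuration realizing $x^{*}_{a_i}$ to one realizing a price for $a_j$ by at most $n$ local transfers, each along an indifference edge of $\Fcal(r)$ (or of a minimal augmentation of it) and hence rescaling the transferred amount by a factor in $[1,1+\overline{\rho}]$ because $v_{-i}\in\Bcal^{N\setminus i}$; this gives $c(n,\overline{\rho})\le(1+\overline{\rho})^{n}$, with the slope jumps that occur when a price meets a budget level $b_k$ handled exactly as in the Perturbation Lemma machinery of Section~\ref{Sec:Complexity}.

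The crux, and the main obstacle, is the second alternative: the informal claim ``if $i$'s own room cannot be made much cheaper, then moving $i$ into $j$'s room is cheap'' is a genuine complementarity property of the envy-free polytope, not a consequence of either monotonicity statement alone, and pinning it down requires (i) selecting the right minimal envy-free representative of the sub-economy, so that the redistributed rent is controlled by $\delta^{*}$ rather than by the ambient rent $m$; (ii) a bound of order $n$ on the number of hops linking the two supporting configurations, each hop rescaling prices by a factor in $[1,1+\overline{\rho}]$, which is what keeps $c(n,\overline{\rho})$ --- and therefore $\omega_1,\omega_2$ --- in range; and (iii) control of budget-regime changes along the re-equilibration, since a slope that jumps from $1$ to $1+\overline{\rho}$ on a traversed edge is precisely what could otherwise blow the constant up. These are the points at which the auxiliary lemmas of this section do their work; with them in hand one reads off admissible universal constants $\omega_1,\omega_2\in(0,1)$ and the theorem follows.
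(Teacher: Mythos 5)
Your overall strategy is the paper's: reduce the theorem, via \citet{Velez-2017-Survey}'s strengthened Maximal Manipulation Theorem, to a price estimate saying that either room $\mu(i)$ admits a large attainable rent cut or room $\mu(j)$ is attainable at a price not much above $r_{\mu(j)}$, with the trade-off governed by a constant depending only on $n$ and $\overline{\rho}$. That reduction, the use of a quasi-linear report that secures the bound against \emph{every} allocation in $F(v_{-i},v'_i)$, and the final bookkeeping with $\omega_1,\omega_2$ all match what the paper does through Lemmas~\ref{Lemma-Manip-zinFu-i} and~\ref{Lemma-Manip-zinFu-i2}.

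The gap is exactly where you place it: the complementarity estimate $x^{*}_{\mu(j)}\leq r_{\mu(j)}+c(n,\overline{\rho})\,\delta^{*}$ is asserted rather than proved, and the sketch offered for it is not adequate. First, ``at most $n$ local transfers, each rescaling by a factor in $[1,1+\overline{\rho}]$'' undercounts: a single ``hop'' between two rooms whose prices must move together is a path in the envy-free graph of length up to $n$, each edge of which can rescale the transferred amount by $(1+\overline{\rho})$, and the resulting bound must hold simultaneously for every agent connected to the holder of $\mu(i)$; this is why the paper's constant is $\theta^{-1}=n(1+\overline{\rho})^{n\times n}$ (Lemma~\ref{lem:kappa}), not $(1+\overline{\rho})^{n}$. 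Second, and more substantively, the estimate does not follow from monotonicity and continuity of minimal envy-free price selections alone. The paper obtains it by an explicit dynamic: starting from $z$, split $N$ into the component $M$ of agents linked to $i$ in the envy-free graph and its complement $L\ni j$, transfer rent $\delta_k$ from $\mu(M)$ to $\mu(L)$ along the selections $\varphi(\cdot)\in F^{\mu(i)}(\cdot)$ and $\psi(\cdot)$ until a new cross-cut indifference appears, and iterate (at most $n$ times) until $\mu(j)$ enters $i$'s component. Lemma~\ref{lem:kappa} --- whose proof needs \citet{Velez-2011-JET}'s result that in an $F^{\mu(i)}$-allocation every agent has a path to the holder of $\mu(i)$ --- then yields $r^{k-1}_{\mu(i)}-r^{k}_{\mu(i)}\geq\theta\delta_k$, so the total transfer $\Delta$ satisfies $\theta\Delta\leq\delta^{*}$ while $\mu(j)$'s price rises by less than $\Delta$. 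Without the choice of the $F^{\mu(i)}$ selection, the termination argument for the merging process, and the chain bound, the claim that the redistributed rent is controlled by $\delta^{*}$ rather than by $m$ has no proof --- and that claim is essentially the whole content of the theorem. (The case $j\rightarrow_{v,z}i$, where $x^{*}_{\mu(j)}\leq r_{\mu(j)}$ outright because reshuffling along the indifference chain preserves no-envy among $N\setminus\{i\}$, is absorbed by your framework but should be stated.)
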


Theorem~\ref{lemma-Q-condition-LimitE} has the crucial consequence that the possibilities to manipulate for an agent never vanish when a sequence of reports lead to allocations that converge to an outcome at which the agent envies another agent. Thus, if reports are constrained to be in $\Bcal(R)$ for bounded $R$, there is no envy in each limit Nash equilibrium outcome of the game induced by an envy-free allocation mechanism.

\begin{lemma}\label{LM:goTHM2}Let  $g$ be an envy-free scf defined on~$\Bcal^N(R)$ for bounded~$R$. Then for each $u\in\Ucal^N$, each limit Nash equilibrium outcome of $(N,\Bcal(N)^N,g,u)$ belongs to $F(u)$.
\end{lemma}

A converse of this result, which holds for any admissible domain  containing quasi-linear preferences, completes our proof of Theorem~\ref{Th:Th2}.\footnote{Note that Lemmas~\ref{LM:goTHM2} and~\ref{LMem:converse-Th2} together also imply Theorem~\ref{Th:Th1}. This theorem was stated without proof by \citet{Velez-2017-Survey}.}

\begin{lemma}\label{LMem:converse-Th2}Let $\Qcal\subseteq \Dcal\subseteq\Ucal$ and $g$ an envy-free scf defined on $\Dcal^N$. Then for each $u\in\Ucal^N$, each $z\in F(u)$ is a limit Nash equilibrium outcome of $(N,\Dcal^N,g,u)$.
\end{lemma}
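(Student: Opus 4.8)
The plan is to show that any envy-free allocation $z\in F(u)$ can be approximately supported as an $\varepsilon$-equilibrium of the direct revelation game by having each agent report a quasi-linear preference that is "steep enough" around her allotment in $z$. Concretely, fix $z\equiv(r,\sigma)\in F(u)$ and $\varepsilon>0$. For each agent $i$, I would construct $v_i^\varepsilon\in\Qcal$ (this is legitimate since $\Qcal\subseteq\Dcal$) whose values $(w_{ia})_{a\in A}$ are chosen so that (i) agent $i$ is indifferent, under $v_i^\varepsilon$, between her true bundle $z_i=(r_{\sigma(i)},\sigma(i))$ and every other bundle $(r_a,a)$ in $z$ — i.e. $w_{ia}-r_a$ is constant in $a$ — and (ii) the common slope is irrelevant since quasi-linear preferences have slope one, but I instead tune the *levels* so that $z$ itself is (one of) the envy-free allocations for $v^\varepsilon\equiv(v_i^\varepsilon)_{i\in N}$. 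Since each $v_i^\varepsilon$ makes $i$ indifferent among all bundles of $z$, the profile $v^\varepsilon$ has $z\in F(v^\varepsilon)$, and in fact $z$ is welfare-equivalent for every agent to every allocation obtained by an arbitrary reshuffle along the indifference. The point of choosing the values to hug $z$ is that deviations are controlled: if agent $i$ unilaterally reports some $v_i'\in\Dcal$, then $g(v_{-i}^\varepsilon,v_i')$ is an envy-free allocation of the economy $(v_{-i}^\varepsilon,v_i')$, and because the other agents' reported preferences are flat (slope one) across the rooms of $z$, the set of envy-free allocations of $(v_{-i}^\varepsilon,v_i')$ is pinned down up to the behavior of agent $i$'s report, so the rent agent $i$ ends up paying for whatever room she gets cannot fall much below $r$ evaluated at that room.

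The key steps, in order, are: (1) write down the candidate reports $v_i^\varepsilon$ explicitly, taking $w_{ia}\equiv r_a + c_i$ for a constant $c_i$ chosen large enough that all values are, say, nonnegative, and verify $z\in F(v^\varepsilon)$; (2) characterize $F(v_{-i}^\varepsilon,v_i')$ for an arbitrary deviation $v_i'$ — using that the $n-1$ fixed reports are flat, argue that in any envy-free allocation of this perturbed economy the prices on the rooms differ from $r$ by at most a single common shift plus a bounded local distortion coming only from agent $i$'s room, and hence the allotment agent $i$ receives cannot be much better for her *true* preference $u_i$ than $z_i$; (3) conclude that for the right choice of the $c_i$'s (tied to a vanishing sequence), no agent can gain more than $\varepsilon$ by deviating, so $v^\varepsilon$ is an $\varepsilon$-equilibrium of $(N,\Dcal^N,g,u)$; (4) let $\varepsilon\to 0$: since $g(v^\varepsilon)\in F(v^\varepsilon)$ is welfare-equivalent to $z$ for the reported preferences and the reports converge to preferences that agree with $u$ in ordinal ranking of the bundles of $z$, the outcomes $g(v^\varepsilon)$ converge to $z$ (or can be taken to, after an appropriate selection among welfare-equivalent allocations), exhibiting $z$ as a limit equilibrium outcome.

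The main obstacle I expect is step (2): controlling what happens after agent $i$ deviates. The subtlety is exactly the one flagged in the text around Theorem~\ref{lemma-Q-condition-LimitE} — a deviating report by $i$ could in principle shift the whole price vector, and one must rule out that $i$ profits from such a shift. The clean way around this is to lean on the structure of the envy-free set for quasi-linear (flat) reports: with $n-1$ agents reporting preferences that are *identical* across rooms (up to additive constants), the maxmin/ envy-free prices are essentially determined, and any envy-free allocation of $(v_{-i}^\varepsilon,v_i')$ must give agent $i$ a bundle whose price is within an $O(1)$-of-the-chosen-constants window of $r$; choosing the constants $c_i$ large (equivalently, scaling so the window shrinks relative to $i$'s true indifference gaps) forces the deviation gain below $\varepsilon$. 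An alternative, and probably the shortest, route is to invoke directly the machinery already developed for the quasi-linear case: since $\Qcal\subseteq\Dcal$, Theorem~\ref{Th:Th1} (or rather its constructive proof in \citealp{Velez-2017-Survey}) already produces, for the *quasi-linear* game, a sequence of $\varepsilon$-equilibria converging to any prescribed $z\in F(u)$; because every such report lies in $\Qcal\subseteq\Dcal$ and expanding the strategy space only adds deviations that, by the same envy-free-price argument, cannot be profitable beyond $\varepsilon$, the same sequence works in the larger game $(N,\Dcal^N,g,u)$. I would present the argument in this second form — reduce to the quasi-linear construction and check that enlarging $\Dcal$ does not destroy the $\varepsilon$-equilibrium property — as it isolates the only genuinely new content (deviation-proofness under a larger report space) into one lemma-sized step.
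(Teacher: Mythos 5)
Your high-level plan (report quasi-linear preferences calibrated to $z$, then control deviations via the structure of the envy-free set and the maximal-manipulation machinery) is the right one, but the specific construction fails. Making each agent exactly indifferent among all bundles of $z$ (taking $w_{ia}=r_a+c_i$) has two fatal consequences. First, adding a constant $c_i$ to all of an agent's values does not change a quasi-linear preference at all, so there is nothing to ``tune'': your family of reports is a single profile, independent of $\varepsilon$. Second, and more importantly, for that profile the envy-free set is exactly $\{(r,\sigma):\sigma\ \text{a bijection}\}$ --- the no-envy inequalities force $r_a-t_a$ to be constant, hence $t=r$, but they place no restriction whatsoever on the assignment. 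The scf may therefore return $(r,\sigma)$ with $\sigma\neq\mu$. That allocation is welfare-equivalent to $z$ only for the \emph{reported} preferences, not the true ones, so the outcomes need not converge to $z$ (step (4) breaks); and an agent with $u_i(r_{\sigma(i)},\sigma(i))<u_i(z_i)$ strictly can deviate to a report whose unique favorite bundle at $(r,\sigma)$ is $(r_{\mu(i)},\mu(i))$ and thereby secure approximately $u_i(z_i)$ (this is exactly Lemma~\ref{Lemma-Manip-zinFu-i}), a gain bounded away from zero. So the profile is not an $\varepsilon$-equilibrium for small $\varepsilon$ either.

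The paper's proof fixes precisely this point by breaking the indifference: it chooses $v_i^\varepsilon\in\Qcal$ so that, relative to the prices $r$, agent $i$'s value for $\mu(i)$ exceeds her value for every other room by $\varepsilon(1/(n-1)+1/(n-1)^2)$. This forces every envy-free allocation for the reported profile to use the assignment $\mu$ and to have prices in $[r_a-\varepsilon,\ r_a+\varepsilon/(n-1)]$, which yields convergence to $z$; the deviation bound then follows from the Maximal Manipulation Theorem of \citet{Fujinaka-Wakayama-2015-JET}: the best that \emph{any} report in $\Ucal\supseteq\Dcal$ can achieve for agent $i$ is governed by $F^i(v^\varepsilon_{-i},u_i)$, at which she still pays at least $r_{\mu(i)}-\varepsilon$. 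Your ``alternative route'' intuition --- that the quasi-linear construction survives enlargement of the report space --- is correct and is exactly why the paper states the deviation bound for all of $\Ucal$ rather than just $\Dcal$; but you still need the strict-preference calibration for the construction to deliver either the equilibrium property or the convergence to $z$.
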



\section{Discussion and concluding remarks}\label{Sec:Discussion}

The careful reader can notice an inconsistency in our motivation and analysis. On the one hand, we are preoccupied by the possibility that \emph{some} agents may truthfully answer the queries of our mechanisms. This motivates us to make the space of reports more informative, so the agent can tell us about her financial constraints. On the other hand, we analyze the incentives of our mechanisms assuming that \emph{all} agents engage in their manipulation.

A more consistent formulation of our manipulation model should account for the existence of the so-called behavioral type agents who are unconditionally truthful. If we assume complete information about the identity of truthful agents among strategic agents (the mechanism designer has no knowledge of this), one can advance the incentive analysis with the tools that we developed for the case when all agents are strategic. Again, our benchmark here is the incentive property of quasi-linear envy-free mechanisms. Suppose that a set of agents~$T$ are uncondionally truthful and~$S\equiv N\setminus T$ are strategic. If agents' true preferences are quasi-linear, then the set of non-cooperative outcomes of a quasi-linear envy-free mechanism for the induced game for agents~$S$ given the truthful reports of agents in~$T$, is welfare equivalent to the set of envy-free allocations for true preferences that are Pareto undominated for~$S$ within the envy-free set \citep{Velez-2015-JET}. It turns out that this property is again retained by our budget-constrained quasi-linear preferences mechanisms.\footnote{Our proof of Lemma~\ref{LM:goTHM2} generalizes to the environment with truthful agents. Thus, only envy-free allocations with respect to true preferences can be obtained as a limit equilibrium of the manipulation game for agents $S$. The proof that each envy-free allocation that is Pareto undominated for $S$ within the envy-free set can be obtained in welfare terms as a limit equilibrium of the manipulation game for agents $S$ can be completed along the lines of Lemma 3 in \citet{Velez-2015-JET}. Step 3 of this proof requires that one account for the preferences of $T$ not being quasi-linear. This can be done based on our Lemma~\ref{lem:kappa1}. The proof that an allocation that is Pareto dominated for $S$ within the envy-free set cannot be a limit equilibrium outcome of the manipulation game for $S$ can be completed based on our Lemma~\ref{lem:kappa} with a similar argument to that we use to prove Theorem~\ref{Th:Th2}.} That is, if agents' true preferences can be encoded in our message space, the set of non-cooperative outcomes from the manipulation of our mechanisms by $S$ is, in welfare terms, the space of envy-free allocation for true preferences that are Pareto undominanted within this set for $S$. This means that our mechanisms provide a meaningful welfare lower bound to truthful agents. They receive an allotment that is no worse than the worst they could obtain in some non-cooperative outcome when they strategically engage in the manipulation of the mechanism \citep[see][for a detailed analysis of this welfare lower bound]{Velez-2015-JET}.

In conclusion, we introduced a family of mechanisms for the allocation of rooms and rent among roommates.  The mechanisms that are currently in use are practical and have desirable incentive properties. Unfortunately, these mechanisms do not allow agents inform the mechanism designer about their financial difficulties. The mechanisms we propose alleviate these issues while they also retain the main properties of the mechanisms that are currently in use. They retain their practicality. In particular, reports are finite dimensional and can be elicited with intuitive questions. Moreover, some of these mechanisms' computational complexity is polynomial. Most importantly, our mechanisms retain the incentive properties of quasi-linear mechanisms: Their complete information non-cooperative outcomes coincide with the envy-free allocations for true preferences.

In summary, we propose a family of mechanisms for a relevant problem that arises because traditional linear structures are not rich enough to represent real-world needs; we show that these mechanisms can be practically implemented; and by analyzing their complete information incentives we advance our understanding of the extent to which utility maximizing behavior can lead to suboptimal outcomes when these mechanisms are operated.

Besides the particular lessons for the equitable rent division problem, our study lays the ground for the analysis of the \textit{expressiveness} of a direct revelation mechanism. In the spirit of the market design paradigm, we are concerned about the details of practical implementation. We are not content with considering an abstract elicitation scheme in which agents magically report their preferences in an admissible domain. We acknowledge the subtlety of elicitation. We envision the mechanism designer will uniquely identify a preference in the domain of admissible preferences for each agent by asking questions that have a true answer and partially reveal the private information of the agents. We then realize that the mechanism designer is exploiting the nature of direct revelation mechanism to have intuitive and simple message spaces. We believe that a mechanism designers who takes advantage of this feature of direct revelation mechanisms, is also invested with a fiduciary duty with the users of the systems he or she designs. This is so because there is evidence that when agents are asked for their private information they actually truthfully reveal it quite often independently of the incentives not to do so. Thus, we need to add to our desiderata of design that if an agent tells us the truth, i.e., takes at face value the queries of our systems, we should reciprocate. That is, we should be able to deliver, to a reasonable degree, on the promises we make. In the equitable rent division problem, this gives us the simple but concrete objective of design that the domains of admissible preferences we use should be expressive enough to encode the most common preference phenomena in the environments of interest. At the same time we need to keep elicitation feasible, and computational complexity and incentives in check.

It would be interesting to further study the complexity and incentive properties of alternative approaches to account for more general preferences in our environment and other problems of interest. It would be also interesting to implement our mechanisms in laboratory experiments and field applications and measure the extent to which they can improve on the performance of the mechanisms currently in use.

\section*{Appendix}

It is convenient for our proofs to consider a variable set of agents, rooms, and rent. This allows us to describe redistribution of resources among subgroups. For $C\subseteq A$, let $\Ucal(C)$ be the space of preferences on $\R\times C$ satisfying \textit{money-monotonicity} and the \textit{compensation assumption}. Consistently, let $\Qcal(C)$ be the sub-domains of $\Ucal(C)$ of quasi-linear preferences; and for $R\subseteq\R_+$ containing $0$, let $\Bcal(C,R)$ the subdomain of preferences as in Definition~\ref{Def:Bcal} for the consumption space $\R\times C$. An economy is a tuple $e\equiv(M,C,u,l)$ where $M\subseteq N$, $C\subseteq A$ is such that $|C|=|M|$, $u\in\Ucal(C)^M$, and $l\in\R$.  Consistently with our notation in the body of the paper, we simply describe an economy $(N,A,u,m)$ by the profile $u\in\Ucal^N$.  An allocation for $(M,C,u,l)$ is a pair $(r,\sigma)$ where $\sigma:M\rightarrow C$ is a bijection and $r\equiv(r_{a})_{a\in C}$ is such that $\sum_{a\in C}r_a=l$. The allotment assigned to agent $i$ by allocation $(r,\sigma)$ is $(r_{\sigma(i)},\sigma(i))$.  We denote the set of allocations for $(M,C,u,l)$ by $Z(M,C,l)$. Recall that we simplify and write $Z$ for $Z(N,A,m)$. For $z\in Z(M,C,l)$ and $s\in Z(L,D,m)$ where $M$ and $L$ are disjoint and $C$ and $D$ are disjoint, $(z,s)$ is the allocation in $Z(M\cup L,C\cup D,l+m)$ where each  $i\in M$ receives $z_i$ and each $j\in L$ receives $s_j$.   An allocation $z\in Z(M,C,l)$ is \textit{envy-free for $e\equiv(M,C,u,l)$} if for each pair $\{i,j\}\subseteq M$, $u_i(z_i)\geq u_i(z_j)$. The set of these allocations is $F(e)$.

\begin{definition}
For each $u\in\Ucal^N$ and $i\in N$ let
\[F^i(u)\equiv\argmax_{s\in F(u)}u_i(s_i).\]
\end{definition}

The following results play a central role in our analysis.
\begin{theorem}[Maximal Manipulation Theorem; \citealp{Andersson-Ehlers-Svensson-2014-TE,Andersson-Ehlers-Svensson-2014-MSSc,Fujinaka-Wakayama-2015-JET}; see Theorem 5.15, \citealp{Velez-2017-Survey}]\label{Theorem-MMT} Let $u\in \Ucal^N$, $i\in N$, and $z\equiv(r,\mu)\in F^i(u)$. Then,
\begin{enumerate}
  \item For each $v_i\in \Ucal$ and each $s\in F(u_{-i},v_i)$, $u_i(s_i)\leq u_i(z_i)$.
  \item For each $\delta>0$ there is $v_i\in \Qcal$ such that for each $s\equiv(t,\sigma)\in F(u_{-i},v_i)$, $\sigma(i)=\mu(i)$ and $t_i\leq r_i+\delta$.
\end{enumerate}
\end{theorem}

\begin{definition}\label{Def:envy-free-graph}\rm The \textbf{envy-free graph} for $u\in \Ucal$ and $z\in Z$ is $\Gamma(u,z)\equiv(N,E)$ where $(i,j)\in E$ if and only if $u_i(z_i)=u_i(z_j)$. If there is a path from $i$ to $j$ in $\Gamma(u,z)$ we write $i\rightarrow_{u,z}j$.
\end{definition}

\begin{lemma}[Lemma 5.13, \citealp{Velez-2017-Survey}]\label{Lemma-flow-agent-i-optimal} Let $u\in\Ucal^N$ and $i\in N$. Then $z\in F^i(u)$ if and only if $z\in F(u)$ and for each $j\in N$, $j\rightarrow_{u,z} i$.
\end{lemma}

\begin{proof}[Proof of Proposition~\ref{Prop:negativeresult}]\rm We prove the proposition by means of an example. Let $N:=\{1,...,n\}$ with $n\geq 3$ and $A:=\{a,b,c_1,...,c_{n-2}\}$. In the example we will construct, agents $3,4,...,n$ and objects $C:=\{c_1,...,c_{n-2}\}$ are replicas. That is, each such an agent has the same preferences and for each $i\in N$ and each pair $\{c_j,c_j'\}\subseteq C$ and each $x\in\R$, $u_i(x,c_j)=u_i(x,c_j')$. In what follows whenever convenient we use, without loss of generality, representative agent~$3$ and object~$c$.

Let $g$ be an scf that for each $u\in\Bcal(\R_+)^N$ selects $g(u)\in F^i(u)$. Furthermore, assume that whenever possible agent~$1$ receives room~$a$, agent~$2$ receives room~$b$, agent~$3$ receives room $c_1$, agent~$4$ receives room $c_2$, and so on. Let $R$ be unbounded.

Let $u\in\Qcal^N$ be the preference profile and $x:=(x_d)_{d\in A}$ the set of consumptions of money in the bundles shown with square dots in Fig.~\ref{Figure-preferences-example} (a). Let $z^\square:=(x,\sigma)$ be the allocation such that $\sigma(1)=a$, $\sigma(2)=b$, $\sigma(3)=c_1$, $\sigma(4)=c_2$, and so on. The essential features of~$z^\square$ and~$u$ are the following. First, $x_a>x_c>x_b$ and  there is a parameter $\delta>0$ that is chosen so $x_b+(n-2)\delta<x_c-\delta$. This parameter is fixed in the construction. Second, agent~$1$ envies agent~$2$. Indeed, $u_1(x_b+\delta/2,b)=u_1(x_a,a)$. Agents $2$ and $3$ receive their best bundle at the allocation. Moreover, $u_2(x_b,b)=u(x_a-\delta/4,a)$ and agent $1$ prefers her bundle to $(x_c-\delta,c)$. Then, $z^\square$ is not Pareto efficient for $u$, because if agents~$1$ and~$2$ exchange rooms so agent~$1$ is indifferent with the change and agent~$2$ pays~$x_a-\delta/2$ for room~$a$, this agent is better off.

We claim that $z^\square$ is a limit Nash equilibrium outcome of $(N,\Bcal(R)^N,g,u)$. To prove so we will construct a sequence of profiles $u^\varepsilon\in\Bcal(N)^N$, for some sequence of vanishing $\varepsilon$s such that the possibilities to manipulate for each agent vanish in the sequence and as $\varepsilon\rightarrow0$, $g(u^\varepsilon)\rightarrow z^\square$.

Let $0<\varepsilon<\delta$ be such that $x_b+(n-2)\delta+\varepsilon<x_c-\delta$ and $(n-1)\varepsilon<(n-2)\delta$. Let $z^\triangle$ be the allocation in which $z_1^\triangle=(x_a-2\varepsilon/3,a)$, $z_2^\triangle=(x_b+(n-2)\delta+2\varepsilon/3,b)$, and $z_3^\triangle=(x_c-\delta,c)$. The bundles in  $z^\triangle$ are shown in triangle dots in Fig.~\ref{Figure-preferences-example} (c)-(e). Denote the room prices at this allocation by $x^\triangle$. Let $z^{\bullet}$ be the allocation in which $z_1^\bullet=(x_a-\varepsilon,a)$, $z_2^\bullet=(x_b+(n-1)\varepsilon,b)$, and $z_3^\bullet=(x_c-\varepsilon,c)$. The bundles in  $z^\bullet$ are shown in black dots in Fig.~\ref{Figure-preferences-example} (c)-(e). Denote the room prices at this allocation by~$x^\bullet$.

Preferences~$u^\varepsilon_1$, $u^\varepsilon_2$, and~$u^\varepsilon_3$ are shown in Fig.~\ref{Figure-preferences-example} (b)-(d), respectively. Preference $u^\varepsilon_1$ is quasi-linear and indifferent between all bundles $(z_i)_{i\in N}$. Preference $u^\varepsilon_2$ is such that \begin{equation}u^\varepsilon_2(z^\triangle_1)=u^\varepsilon_2(z^\triangle_2)=u^\varepsilon_2(z^\triangle_3)\textrm{ and } u^\varepsilon_2(z^\bullet_1)=u^\varepsilon_2(z^\bullet_2).\label{Eq:uepsilon1}\end{equation}
We claim that there is indeed $u^\varepsilon_2\in\Bcal(R)$ satisfying these two conditions. Intuitively, since $x_a-\varepsilon>x_c-\delta>x_b+(n-2)\delta+\varepsilon$, we can choose $\rho_i$ high enough so the gain of utility from a rebate at $(x_a,a)$ for the agent is arbitrarily high. Moreover, since we can choose the budget report for the agent we can calibrate the required indifference. More precisely, consider a preference in $\Bcal(R)$ associated with parameters $v^i$, $b_i\in[x^\bullet_a,x^\triangle_a]$, and $\rho_i\in R$. Let $\tau:=x^\triangle_b-x^\bullet_b$. Then, $\tau= x_b+(n-2)\delta+2\varepsilon/3-(x_b+(n-1)\varepsilon)\geq \varepsilon/3$. Note that $x^\triangle_a-x^\bullet_a=\varepsilon/3$. We will show that these parameters can be chosen such that (\ref{Eq:uepsilon1}) holds.
Since $x_a-\varepsilon>x_b+(n-2)\delta+\varepsilon$, we have that $x^\triangle_b<x^\bullet_a$. Thus, (\ref{Eq:uepsilon1}) holds whenever $v_a^i-x^\triangle_a-\rho_i(x^\triangle_a-b_i)=v_b^i-x^\triangle_b$ and $v_a^i-x^\bullet_a=v_b^i-x^\bullet_b$. We can choose then $v^i$ such that
$v_a^i-v_b^i=x^\bullet_a-x^\bullet_b=x^\triangle_a-\varepsilon/3-x^\triangle_b+\tau$. Thus, we need to be able to choose $b_i$ and $\rho_i$ such that $x^\triangle_a-\varepsilon/3-x^\triangle_b+\tau-x^\triangle_a-\rho_i(x^\triangle_a-b_i)+x^\triangle_b=0$. That is, (for $\rho_i>0$)
\[b_i=x^\triangle_a-\frac{1}{\rho_i}(\tau-\varepsilon/3).\]
Since $R$ is unbounded, one can choose $\rho_i\in R$ such that $b_i$ determined by the equation above belongs to $[x^\bullet_a,x^\triangle_a]$. Finally,  preference $u^\varepsilon_3$ is such that \begin{equation}u^\varepsilon_3(z^\square_1)=u^\varepsilon_3(z^\square_2)=u^\varepsilon_3(z^\square_3)\textrm{ and } u^\varepsilon_3(z^\triangle_1)=u^\varepsilon_3(z^\triangle_3).\label{Eq:uepsilon2}\end{equation}
Since $x^a>x_c$, we can choose $\varepsilon$ small enough so $x^\triangle_a=x_a-2\varepsilon/3>x_c$ and $\varepsilon<\delta$. Thus, the construction  for $u^\varepsilon_2$ can be exactly replicated to prove that there is actually $u^\varepsilon_3\in\Bcal^N(R)$ satisfying (\ref{Eq:uepsilon2}).

We will show now that as $\varepsilon\rightarrow0$, $g(u^\varepsilon)\rightarrow z^\square$. Indeed, we prove that for each $\varepsilon$, $g(u^\varepsilon)=z$. Consider the graph $\Gamma(u^\varepsilon_1,u_2,u^\varepsilon_3,z^\square)$. We show this graph in Fig.~\ref{Figure-preferences-example} (e) where link $(i,j)$ is shown as an arrow from $z_i^\square$ to $z_j^\square$ (note that for simplicity we do not show the arrows $(i,i)$ in the graph). Since the preferred bundle in $z^\square$ for both $u_2$ and $u_2^\varepsilon$ is $z_2^\square$, we have that $\Gamma(u^\varepsilon_1,u_2,u^\varepsilon_3,z^\square)=\Gamma(u^\varepsilon,z^\square)$. Moreover, note that both $u^\varepsilon_1$ and $u^\varepsilon_3$ are indifferent among each bundle in $z^\square$. Thus, $z^\square\in F(u^\varepsilon)$. Since for each $j\in N$, there is a path in $\Gamma(u^\varepsilon,z^\square)$ that flows to agent $2$, by Lemma~\ref{Lemma-flow-agent-i-optimal}, $z^\square\in F^i(u^\varepsilon)$. Since $z^\square$'s assignment of rooms coincides with the one chosen by $g$ whenever possible, we have that $g(u^\varepsilon)=z^\square$.

We finally show that for each $i\in N$ and each $u_i'\in \Bcal(R)$, $u_i(g(u^\varepsilon_{-i},u_i'))\leq u(g(u^\varepsilon))+\varepsilon=u(z^\square_i)+\varepsilon$. This proves that $z^\square$ is a limit equilibrium outcome of $(N,A,g,u)$. Consider first agent~$1$. Observe that $z^\triangle\in F(u^\varepsilon_{-1},u_1)$. Moreover, there is a path from each $j\in N$ to agent $1$ in $\Gamma(u^\varepsilon_{-1},u_1,z^\triangle)$. Thus, $z^\triangle\in F^1(u^\varepsilon_{-1},u_1)$. By Theorem~\ref{Theorem-MMT},  for each $u_1'\in \Bcal(R)$, $u_1(g(u^\varepsilon_{-1},u_1'))\leq u_1(z^\triangle_1)=u_1(z^\square_1)+2\varepsilon/3$. Consider now agent~$2$. Observe that $z^\square\in F(u^\varepsilon_{-2},u_2)$. Moreover, there is a path from each $j\in N$ to agent~$2$ in $\Gamma(u^\varepsilon_{-2},u_2,z^\square)$. Thus, $z^\square\in F^2(u^\varepsilon_{-2},u_2)$. By Theorem~\ref{Theorem-MMT},  for each $u_2'\in \Bcal(R)$, $u_2(g(u^\varepsilon_{-2},u_2'))\leq u_2(z^\square_1)$.  Finally consider agent $k\in\{3,...,n\}$. Observe that $z^\bullet\in F(u^\varepsilon_{-k},u_k)$. Moreover, there is a path from each $j\in N$ to agent~$k$ in $\Gamma(u^\varepsilon_{-k},u_k,z^\bullet)$. Thus, $z^\bullet\in F^k(u^\varepsilon_{-k},u_k)$. By Theorem~\ref{Theorem-MMT},  for each $u_k'\in \Bcal(R)$, $u_k(g(u^\varepsilon_{-k},u_k'))\leq u_k(z^\bullet_k)\leq u_k(z^\square_k)+\varepsilon$.
\end{proof}

The following lemma is used in the proof of the subsequent results.

\begin{lemma}\label{Lemma-Manip-zinFu-i}Let $u\in \Ucal^N$, $i\in N$, $v_i\in\Ucal$, and $z\equiv(r,\mu)\in F(u_{-i},v_i)$. Then, for each  $\delta>0$, there is $v_i'\in \Qcal$ such that for each $s\equiv(t,\sigma)\in F(u_{-i},v_i')$, $\sigma(i)=\mu(i)$ and $t_{\mu(i)}\leq r_{\mu(i)}+\delta$.
\end{lemma}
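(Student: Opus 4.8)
The plan is to have agent $i$ report the quasi-linear preference $v_i'\in\Qcal$ that, evaluated at the price vector $r$, ranks room $\mu(i)$ strictly first over every other room by a margin of exactly $\delta$. Concretely, I would normalize $v_i'$ to have values $w$ with $w_{\mu(i)}=0$ and $w_a=r_a-r_{\mu(i)}-\delta$ for $a\neq\mu(i)$, so that $w_{\mu(i)}-w_a=r_{\mu(i)}+\delta-r_a$ for every $a$. First one checks $z=(r,\mu)\in F(u_{-i},v_i')$: the no-envy conditions not involving agent $i$'s preference are unchanged from $z\in F(u_{-i},v_i)$, and under $v_i'$ the bundle $(r_{\mu(i)},\mu(i))$ is the unique best among $\{(r_a,a):a\in A\}$; hence the set over which the conclusion quantifies is nonempty.

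Now fix any $s=(t,\sigma)\in F(u_{-i},v_i')$; two things must be shown. The price bound, in the case $\sigma(i)=\mu(i)$, is a one-line count: agent $i$'s no-envy gives $t_a\geq t_{\mu(i)}-(r_{\mu(i)}+\delta-r_a)$ for each $a\neq\mu(i)$, and summing these $n-1$ inequalities together with $\sum_a t_a=m=\sum_a r_a$ yields $t_{\mu(i)}\leq r_{\mu(i)}+\frac{n-1}{n}\delta<r_{\mu(i)}+\delta$.

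The substantive part is showing $\sigma(i)=\mu(i)$. Suppose not, say $\sigma(i)=b\neq\mu(i)$, and set $d_a:=t_a-r_a$, so $\sum_a d_a=0$. Agent $i$'s no-envy toward each room gives $d_a\geq d_b$ for all $a\neq\mu(i)$ and $d_{\mu(i)}\geq d_b+\delta$, whence $0=\sum_a d_a\geq nd_b+\delta$ and so $d_b<0$: room $b$ is strictly cheaper at $s$ than at $z$. I would then propagate this along the cycle $C$ of the room-permutation $\pi:=\sigma\circ\mu^{-1}$ containing $\mu(i)$ (and hence $b$, since $\pi(\mu(i))=b$): whenever a room $c\in C$ with $c\neq\mu(i)$ is strictly cheaper at $s$ than at $z$, the agent $\mu^{-1}(c)$, whose $z$-bundle $(r_c,c)$ is weakly her best among all of $z$'s bundles, does not envy the occupant of $c$ at $s$, and combining this with money-monotonicity forces her $s$-room $\pi(c)$ to be strictly cheaper at $s$ too. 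Starting from $b$ and iterating once around $C$ propagates $d_c<0$ to every $c\in C$, including back to $\mu(i)$.

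This produces the contradiction by a minimal-counterexample argument — equivalently, induction on the number of agents, using the paper's variable-economy framework. If $C$ is all of $A$, then $\sum_{c\in C}d_c=0$ contradicts $d_c<0$ for every $c\in C$. Otherwise the agents occupying the rooms of $C$ — the same set $S_C=\mu^{-1}(C)=\sigma^{-1}(C)$ in both allocations — together with the rooms $C$ form a strictly smaller economy in which $s|_C$ is still envy-free, while $z|_C$ still satisfies every hypothesis of the statement for $|C|<n$ agents: agent $i$ sits at $\mu(i)$, the other agents' no-envy is intact, and the restriction of $v_i'$ to $C$ still has the $\delta$-margin form; so $s|_C$ contradicts the inductive hypothesis. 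The most delicate point is precisely this reduction — tracking along the cycle which agent occupies which room in $s$ versus $z$, and verifying that the restricted economy inherits all hypotheses — whereas the choice of $v_i'$ and the price count are routine.
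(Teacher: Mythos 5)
Your construction of $v_i'$ does not force $\sigma(i)=\mu(i)$, so the ``substantive part'' of your argument is not merely incomplete --- the claim it tries to establish is false for the report you chose. Concretely: let $n=3$, $i=1$, rooms $\{a,b,c\}$, $m=0$, $r=(0,0,0)$, $\mu(1)=a$, $\mu(2)=b$, $\mu(3)=c$, and $\delta=1$, so your report has values $w=(0,-1,-1)$; take $v_1=v_1'$, $u_2(p,a)=-10p$, $u_2(p,b)=-p$, $u_2(p,c)=-p-5$, and $u_3$ quasi-linear with values $(0,0,10)$. All preferences are in $\Ucal$ and $z=(r,\mu)\in F(u_{-1},v_1)$. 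But $s=(t,\sigma)$ with $t=(-1/9,\,-10/9,\,11/9)$, $\sigma(1)=b$, $\sigma(2)=a$, $\sigma(3)=c$ is also envy-free for $(u_{-1},v_1')$: agent $1$ gets $1/9$ from both $(t_a,a)$ and $(t_b,b)$ and $-20/9$ from $c$; agent $2$ gets $10/9$ from both $a$ and $b$ and $-56/9$ from $c$; agent $3$ strictly prefers $c$. So $\sigma(1)\neq\mu(1)$. The mechanism of failure is exactly the hole in your induction: agent $2$'s non-quasi-linear preference converts a small rebate on room $a$ into a large utility gain, so both rooms in the $\pi$-cycle through $\mu(1)$ get cheaper (financed by room $c$ getting more expensive) while no-envy holds everywhere. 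Your restricted economy on the cycle $C=\{a,b\}$ then has total rent $\sum_{c\in C}t_c=-11/9\neq 0=\sum_{c\in C}r_c$ --- necessarily so, since you proved $d_c<0$ on all of $C$ --- hence $s|_C$ is an envy-free allocation of a \emph{different} economy than $z|_C$, and the inductive hypothesis, which fixes the total rent, says nothing about it. (Your other steps --- nonemptiness of $F(u_{-i},v_i')$, the price bound when $\sigma(i)=\mu(i)$, $d_{\sigma(i)}<0$, and the propagation of $d_c<0$ around the cycle containing $\mu(i)$ --- are all correct; nothing in your argument controls the remaining cycles, and nothing can.)

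The missing idea is that the quasi-linear report must be anchored to agent $i$'s \emph{optimal} envy-free allocation for the other agents' reports, not to the given allocation $z$. The paper's proof takes $(r',\gamma)\in F^i(u_{-i},v_i)$, notes $r'_{\mu(i)}\leq r_{\mu(i)}$, builds an auxiliary preference whose unique best bundle at $(r',\gamma)$ is $(r'_{\mu(i)},\mu(i))$, applies \citet{Fujinaka-Wakayama-2015-JET} to show that the $i$-optimal envy-free allocation for the modified profile keeps the price vector $r'$ and assigns $\mu(i)$ to $i$, and only then invokes the maximal manipulation theorem \citep[Theorem 5.15]{Velez-2017-Survey} to obtain a quasi-linear report under which \emph{every} envy-free allocation gives $i$ room $\mu(i)$ at a price at most $r'_{\mu(i)}+\delta\leq r_{\mu(i)}+\delta$. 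It is the extremality of the $i$-optimal allocation within the envy-free set that excludes outcomes like the $s$ above; a report computed only from $r$, $\mu(i)$ and $\delta$ cannot do this, because the correct report genuinely depends on $u_{-i}$.
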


\begin{proof}Let $a\equiv\mu(i)$ and $(r',\gamma)\in F^i(u_{-i},v_i)$. By \citet[Lemma 3,][]{A-D-G-1991-Eca}, $r'_{a}\leq r_a$. Let $u_i'\in\Ucal$ be a preference that prefers $(r'_a,a)$ to all other bundles at $(r',\gamma)$, i.e., such that for each $b\in A\setminus\{a\}$, $u_i'(r'_{a},a)>u_i'(r'_b,b)$. Let $(r'',\gamma')\in F^i(u_{-i},u_i')$. By \citet[Lemma 3,][]{Fujinaka-Wakayama-2015-JET}, $r''=r'$. Thus, $\gamma'(i)=a$. Let $\delta>0$. By Theorem~\ref{Theorem-MMT} (Statement 3), there is $v_i'\in \Qcal$ such that for each $s\equiv(t,\sigma)\in F(u_{-i},v_i')$, $\sigma(i)=a$ and $t_a\leq r_a''+\delta=r_a'+\delta\leq r_a+\delta$.\end{proof}

\begin{lemma}\label{Lemma-Manip-zinFu-i2}Let $u\in \Ucal^N$, $i\in N$, $v_i\in\Ucal$, $z\equiv(r,\mu)\in F(u_{-i},v_i)$, and $j\in N$ such that $j\rightarrow_{u,z}i$. Then, for each $\delta>0$, there is $v_i'\in \Qcal$ such that for each $(t,\sigma)\in F(u_{-i},v_i')$, $\sigma(i)=\mu(j)$ and $t_{\mu(j)}\leq r_{\mu(j)}+\delta$.
\end{lemma}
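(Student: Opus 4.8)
The plan is to reduce the claim to Lemma~\ref{Lemma-Manip-zinFu-i}. That lemma already shows that with a quasi-linear report agent~$i$ can guarantee that, in \emph{every} envy-free allocation for the resulting profile, she is assigned the room she holds at a \emph{given} envy-free allocation, at a price arbitrarily close to and not above its price there. So it suffices to produce an auxiliary report $\hat v_i\in\Ucal$ together with an allocation $\hat z\equiv(\hat r,\hat\mu)\in F(u_{-i},\hat v_i)$ in which $\hat\mu(i)=\mu(j)$ and $\hat r_{\mu(j)}=r_{\mu(j)}$. Granting this, Lemma~\ref{Lemma-Manip-zinFu-i} applied to $\hat v_i$ and $\hat z$ delivers, for each $\delta>0$, a $v_i'\in\Qcal$ such that every $(t,\sigma)\in F(u_{-i},v_i')$ satisfies $\sigma(i)=\hat\mu(i)=\mu(j)$ and $t_{\mu(j)}=t_{\hat\mu(i)}\le\hat r_{\mu(j)}+\delta=r_{\mu(j)}+\delta$, which is exactly the desired conclusion.

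To construct $\hat z$ I would exploit the path witnessing $j\rightarrow_{u,z}i$: write it as $j=j_0,j_1,\dots,j_\ell=i$ with distinct nodes, so that $j_k\neq i$ for $k<\ell$ and, by Definition~\ref{Def:envy-free-graph}, $u_{j_k}(z_{j_k})=u_{j_k}(z_{j_{k+1}})$ for $k=0,\dots,\ell-1$. Keep every room price unchanged, $\hat r\equiv r$, and cyclically rotate the allotments along the path: set $\hat\mu(j_k)\equiv\mu(j_{k+1})$ for $k<\ell$, $\hat\mu(i)\equiv\mu(j_0)=\mu(j)$, and $\hat\mu(k')\equiv\mu(k')$ for every $k'$ not on the path. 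Then $\hat\mu$ is a bijection $N\to A$, $\hat z$ assigns exactly the same multiset of allotments as $z$, and it collects rent $m$. For $\hat v_i$ take any quasi-linear preference with $\hat v_i(r_{\mu(j)},\mu(j))>\hat v_i(r_b,b)$ for all $b\neq\mu(j)$ (choose the value of $\mu(j)$ large enough); in particular $\hat v_i\in\Qcal\subseteq\Ucal$.

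It then remains to verify $\hat z\in F(u_{-i},\hat v_i)$. Agents not on the path have their own and everyone else's allotment unchanged, so their no-envy is inherited from $z$. For $j_k$ with $k<\ell$, whose reported preference in $(u_{-i},\hat v_i)$ is the true $u_{j_k}$, the new allotment is $z_{j_{k+1}}$ and $u_{j_k}(z_{j_{k+1}})=u_{j_k}(z_{j_k})\ge u_{j_k}(r_b,b)$ for every $b\in A$ --- the equality being the path edge and the inequality the no-envy of $z$ for $j_k$; since $\hat z$ merely reassigns the allotments of $z$, this says $j_k$ envies no one at $\hat z$. Finally agent~$i$ receives $(r_{\mu(j)},\mu(j))$, which $\hat v_i$ strictly prefers to every other allotment of $\hat z$ by construction. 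Hence $\hat z\in F(u_{-i},\hat v_i)$, completing the reduction. (If $j=i$ the path is trivial, $\mu(j)=\mu(i)$, and the statement is already Lemma~\ref{Lemma-Manip-zinFu-i}.)

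I do not anticipate a substantive obstacle; the one point deserving care is that the indifferences used to build $\hat z$ belong to the envy-free graph $\Gamma(u,z)$ of the \emph{true} profile while $\hat z$ must be envy-free for the \emph{reported} economy $(u_{-i},\hat v_i)$. This causes no difficulty precisely because every path node other than the endpoint~$i$ has true preference equal to its reported preference, so both the indifference equalities and the no-envy inequalities of $z$ transfer verbatim to the reported economy.
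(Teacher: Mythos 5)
Your proposal is correct and follows essentially the same route as the paper's proof: construct a report for agent $i$ that top-ranks $(r_{\mu(j)},\mu(j))$ among the bundles of $z$, reshuffle allotments along the path witnessing $j\rightarrow_{u,z}i$ to obtain an envy-free allocation for the reported economy in which $i$ holds $(r_{\mu(j)},\mu(j))$, and then invoke Lemma~\ref{Lemma-Manip-zinFu-i}. Your write-up merely spells out the cyclic rotation and the no-envy verification that the paper leaves implicit.
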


\begin{proof}Let $u_i'\in\Ucal$ be a preference that prefers $(r_{\mu(j)},\mu(j))$ to all other bundles at $(r,\mu)$, i.e., such that for each $a\in A\setminus\{\mu(j)\}$, $u_i'(r_{\mu(j)},\mu(j))>u_i'(r_a,a)$. Since $j\rightarrow_{u,z}i$, $j\rightarrow_{u_{-i},u_i',z}i$. Thus, there is an allocation in $s\in F(u_{-i},u_i')$ that is obtained by reshuffling $z$ along the path that defines $j\rightarrow_{u_{-i},u_i',z}i$ and assigning $(r_{\mu(j)},\mu(j))$ to agent $i$. Thus, $s_i=(r_{\mu(j)},\mu(j))$. By Lemma~\ref{Lemma-Manip-zinFu-i}, for each  $\delta>0$, there is $v_i'\in \Qcal$ such that for each $(t,\sigma)\in F(u_{-i},v_i')$, $\sigma(i)=\mu(j)$ and $t_{\mu(j)}\leq r_{\mu(j)}+\delta$.\end{proof}

\newpage
\begin{center}
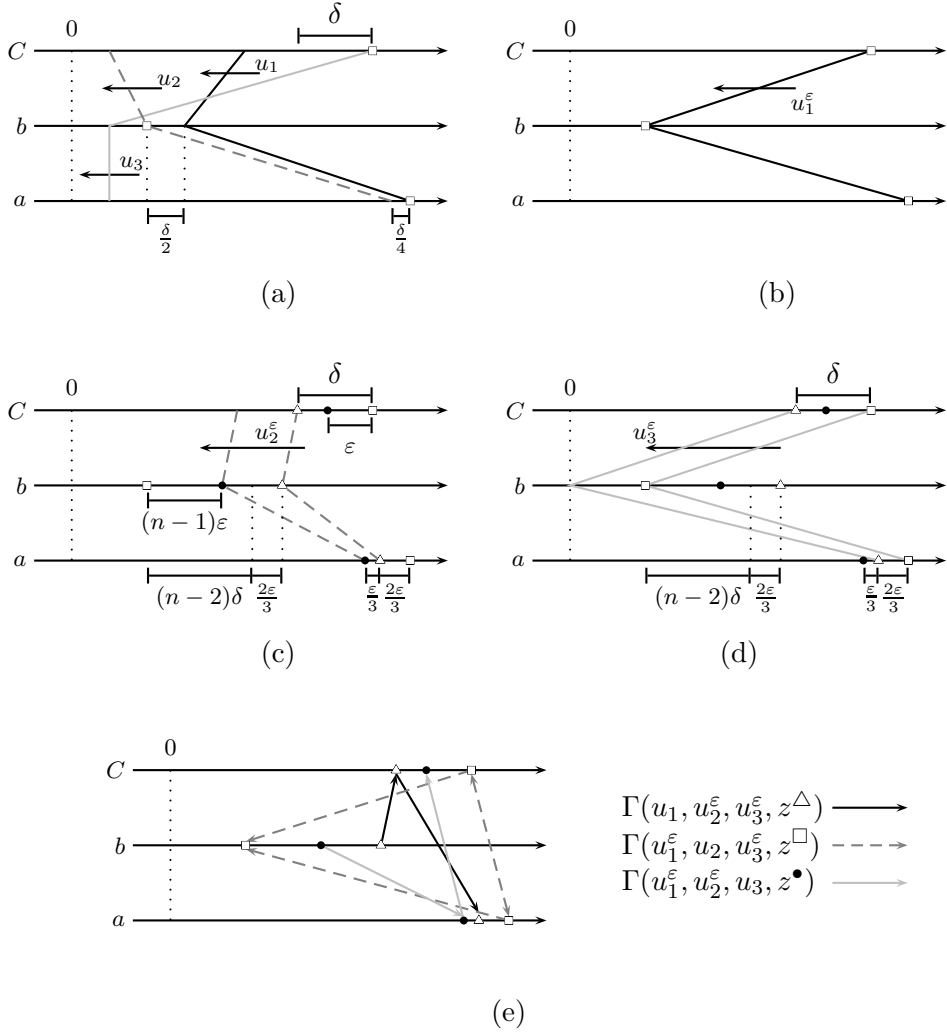
\begin{figure}[H]
\centering
\begin{pspicture}(-1,-1.75)(6,4)
\rput[r](-1.1,0){$\mbox{\footnotesize$a$}$}\rput[r](-1.1,1){$\mbox{\footnotesize$b$}$}%
\rput[r](-1.1,2){$\mbox{\footnotesize$C$}$}
\psline{->}(-1,0)(4.5,0)
\psline{->}(-1,1)(4.5,1)
\psline{->}(-1,2)(4.5,2)
\psline[linestyle=dotted](-.5,0)(-.5,2)
\rput[c](-.5,2.3){$\mbox{\footnotesize$0$}$}
\psline{|-|}(2.5,2.2)(3.5,2.2)
\rput[c](3,2.5){$\delta$}

\psline{<-}(-.1,1.5)(.7,1.5)
\rput[c](.8,1.6){$\mbox{\footnotesize$u_2$}$}
\psline{<-}(1.2,1.7)(2,1.7)
\rput[c](2.1,1.8){$\mbox{\footnotesize$u_1$}$}
\psline{<-}(-.4,.35)(.4,.35)
\rput[c](.3,.5){$\mbox{\footnotesize$u_3$}$}

\psline(4,0)(1,1)(1.8,2)
\psline[linecolor=lightgray](3.5,2)(0,1)(0,0)
\psline[linecolor=gray,linestyle=dashed](3.75,0)(.5,1)(0,2)
\psline[linestyle=dotted](0.5,0)(0.5,1)\psline[linestyle=dotted](1,0)(1,1)
\psline{|-|}(.5,-.2)(1,-.2)\psline{|-|}(3.75,-.2)(4,-.2)
\rput[c](.75,-.5){$\mbox{\footnotesize$\frac{\delta}{2}$}$}
\rput[c](3.875,-.5){$\mbox{\footnotesize$\frac{\delta}{4}$}$}
\psdots[dotstyle=square](4,0)(0.5,1)(3.5,2)
\rput(2.25,-1.25){(a)}
\end{pspicture}
\begin{pspicture}(-.5,-1.75)(5,4)
\rput[r](-1.1,0){$\mbox{\footnotesize$a$}$}\rput[r](-1.1,1){$\mbox{\footnotesize$b$}$}%
\rput[r](-1.1,2){$\mbox{\footnotesize$C$}$}
\psline{->}(-1,0)(4.5,0)
\psline{->}(-1,1)(4.5,1)
\psline{->}(-1,2)(4.5,2)
\psline[linestyle=dotted](-.5,0)(-.5,2)
\rput[c](-.5,2.3){$\mbox{\footnotesize$0$}$}
\psline{<-}(1.4,1.5)(2.5,1.5)
\rput[c](2.6,1.3){$\mbox{\footnotesize$u_1^\varepsilon$}$}

\psline(4,0)(.5,1)(3.5,2)
\psdots[dotstyle=square](4,0)(0.5,1)(3.5,2)
\rput(2.25,-1.25){(b)}
\end{pspicture}
\\
\begin{pspicture}(-1,-1.75)(6,2.5)
\rput[r](-1.1,0){$\mbox{\footnotesize$a$}$}\rput[r](-1.1,1){$\mbox{\footnotesize$b$}$}%
\rput[r](-1.1,2){$\mbox{\footnotesize$C$}$}
\psline{->}(-1,0)(4.5,0)
\psline{->}(-1,1)(4.5,1)
\psline{->}(-1,2)(4.5,2)
\psline[linestyle=dotted](-.5,0)(-.5,2)
\rput[c](-.5,2.3){$\mbox{\footnotesize$0$}$}
\psline{|-|}(2.5,2.2)(3.5,2.2)
\rput[c](3,2.5){$\delta$}

\psline{<-}(1.2,1.5)(2.6,1.5)
\rput[c](2.1,1.7){$\mbox{\footnotesize$u_2^\varepsilon$}$}

\psline[linestyle=dashed,linecolor=gray](3.6,0)(2.3,1)(2.5,2)
\psline[linestyle=dashed,linecolor=gray](3.4,0)(1.5,1)(1.7,2)
\psline[linestyle=dotted](1.9,0)(1.9,1)
\psline[linestyle=dotted](2.3,0)(2.3,1)
\psdots[dotstyle=square](4,0)(0.5,1)(3.5,2)
\psdots[dotstyle=triangle](3.6,0)(2.3,1)(2.5,2)
\psdots[dotsize=3pt](3.4,0)(1.5,1)(2.9,2)
\psline{|-|}(3.4,-.2)(3.6,-.2)\psline{-|}(3.6,-.2)(4,-.2)
\psline{|-|}(0.5,-.2)(1.9,-.2)\psline{-|}(1.9,-.2)(2.3,-.2)
\psline{|-|}(0.5,0.8)(1.5,0.8)
\psline{|-|}(2.9,1.8)(3.5,1.8)
\rput[c](3.5,-.5){$\mbox{\footnotesize$\frac{\varepsilon}{3}$}$}
\rput[c](3.8,-.5){$\mbox{\footnotesize$\frac{2\varepsilon}{3}$}$}
\rput[c](2.1,-.5){$\mbox{\footnotesize$\frac{2\varepsilon}{3}$}$}
\rput[c](1.2,-.5){$\mbox{\footnotesize$(n-2)\delta$}$}
\rput[c](1,.5){$\mbox{\footnotesize$(n-1)\varepsilon$}$}
\rput[c](3.2,1.5){$\mbox{\footnotesize$\varepsilon$}$}
\rput(2.25,-1.25){(c)}
\end{pspicture}
\begin{pspicture}(-.5,-1.75)(5,3)
\rput[r](-1.1,0){$\mbox{\footnotesize$a$}$}\rput[r](-1.1,1){$\mbox{\footnotesize$b$}$}%
\rput[r](-1.1,2){$\mbox{\footnotesize$C$}$}
\psline{->}(-1,0)(4.5,0)
\psline{->}(-1,1)(4.5,1)
\psline{->}(-1,2)(4.5,2)
\psline[linestyle=dotted](-.5,0)(-.5,2)
\rput[c](-.5,2.3){$\mbox{\footnotesize$0$}$}
\psline{|-|}(2.5,2.2)(3.5,2.2)
\rput[c](3,2.5){$\delta$}

\psline{<-}(.5,1.5)(2.3,1.5)
\rput[c](.5,1.7){$\mbox{\footnotesize$u_3^\varepsilon$}$}

\psline[linecolor=lightgray](4,0)(0.5,1)(3.5,2)
\psline[linecolor=lightgray](3.6,0)(-.5,1)(2.5,2)
\psline[linestyle=dotted](1.9,0)(1.9,1)
\psline[linestyle=dotted](2.3,0)(2.3,1)
\psdots[dotstyle=square](4,0)(0.5,1)(3.5,2)
\psdots[dotstyle=triangle](3.6,0)(2.3,1)(2.5,2)
\psdots[dotsize=3pt](3.4,0)(1.5,1)(2.9,2)
\psline{|-|}(3.4,-.2)(3.6,-.2)\psline{-|}(3.6,-.2)(4,-.2)
\psline{|-|}(0.5,-.2)(1.9,-.2)\psline{-|}(1.9,-.2)(2.3,-.2)
%
\rput[c](3.5,-.5){$\mbox{\footnotesize$\frac{\varepsilon}{3}$}$}
\rput[c](3.8,-.5){$\mbox{\footnotesize$\frac{2\varepsilon}{3}$}$}
\rput[c](2.1,-.5){$\mbox{\footnotesize$\frac{2\varepsilon}{3}$}$}
\rput[c](1.2,-.5){$\mbox{\footnotesize$(n-2)\delta$}$}
%
\rput(1.75,-1.25){(d)}
\end{pspicture}
\begin{pspicture}(-1,-1.75)(9,3)
\rput[r](-1.1,0){$\mbox{\footnotesize$a$}$}\rput[r](-1.1,1){$\mbox{\footnotesize$b$}$}%
\rput[r](-1.1,2){$\mbox{\footnotesize$C$}$}
\psline{->}(-1,0)(4.5,0)
\psline{->}(-1,1)(4.5,1)
\psline{->}(-1,2)(4.5,2)
\psline[linestyle=dotted](-.5,0)(-.5,2)
\rput[c](-.5,2.3){$\mbox{\footnotesize$0$}$}
\rput[l](5.5,1.5){$\Gamma(u_1,u_2^\varepsilon,u_3^\varepsilon,z^\triangle)$}
\psline{->}(8.3,1.5)(9.3,1.5)
\rput[l](5.5,1){$\Gamma(u_1^\varepsilon,u_2,u_3^\varepsilon,z^\square)$}
\psline[linestyle=dashed,linecolor=gray]{->}(8.3,1)(9.3,1)
\rput[l](5.5,.5){$\Gamma(u_1^\varepsilon,u_2^\varepsilon,u_3,z^\bullet)$}
\psline[linecolor=lightgray]{->}(8.3,.5)(9.3,.5)
\psline{->}(2.3,1)(2.5,1.95)
\psline{->}(2.5,1.95)(3.6,0.1)
\psline[linestyle=dashed,linecolor=gray]{<->}(4,0.05)(3.5,1.95)
\psline[linestyle=dashed,linecolor=gray]{->}(3.5,2)(0.5,1.05)
\psline[linestyle=dashed,linecolor=gray]{->}(4,0)(0.5,.95)
\psline[linecolor=lightgray]{<-}(3.4,0.05)(1.5,1)
\psline[linecolor=lightgray]{->}(3.4,0)(2.9,1.95)
\psdots[dotstyle=square](4,0)(0.5,1)(3.5,2)
\psdots[dotstyle=triangle](3.6,0)(2.3,1)(2.5,2)
\psdots[dotsize=3pt](3.4,0)(1.5,1)(2.9,2)
\rput(4,-1.25){(e)}
\end{pspicture}
\caption{An inefficient limit Nash equilibrium outcome of $(N,A,u,g)$. A point in each axis, say $x_a$ in the axis corresponding to object $a$,  represents the bundle $(x,a)$. Axis $C$ is a representative axis for objects $C=\{c_1,...,c_{n-2}\}$. The consumptions at allocation $z^\square$ are represented by square dots. Similarly for $z^\triangle$ and $z^\bullet$. In panel (e), arrows represent the links in the corresponding graph for the agents who receive the respective bundles in the corresponding allocation. }\label{Figure-preferences-example}
\end{figure}
\end{center}

\newpage

\begin{lemma}\label{lem:kappa1}There is a constant $\eta>0$ such that for each $M\subseteq N$, $C\subseteq A$ such that $|C|=|N|$, $u\in\Bcal(R,C)^M$,  $s\equiv(t,\sigma)\in F(M,C,u,l)$,  and $z\equiv(r,\mu)\in F(M,C,u,l')$, if $j\rightarrow_{u,s}i$ and $t_{\sigma(i)}<r_{\sigma(i)}$, then
\[r_{\sigma(j)}-t_{\sigma(j)}\leq \eta (r_{\sigma(i)}-t_{\sigma(i)}).\]
\end{lemma}

\begin{proof}Fix $z\equiv (r,\mu)$  and $s\equiv(t,\sigma)$ as in the statement of the lemma. Then, $j\rightarrow_{u,s}i$ and $t_{\sigma(i)}<r_{\sigma(i)}$. We first prove the result when $(j,i)\in\Gamma(u,s)$. We do this in Steps 1-3 below.

\medskip
\textit{\textbf{Step 1}: For each $j\in N$, $r_{\mu(k)}-t_{\mu(k)}\leq \max\{0,(1+\overline{\rho})(r_{\sigma(k)}-t_{\sigma(k)})\}$.} Suppose first that $r_{\sigma(k)}\leq t_{\sigma(k)}$. That is, the room of agent $k$ at $s$ receives a non-negative rent rebate at $z$. By \citet[Lemma 3,][]{A-D-G-1991-Eca}, the room of agent $k$ at $z$ receives also a non-negative rent rebate at $z$. Thus, $r_{\mu(k)}\leq t_{\mu(k)}$ and $r_{\mu(k)}-t_{\mu(k)}\leq \max\{0,(1+\overline{\rho})(r_{\sigma(k)}-t_{\sigma(k)})\}$. We can suppose then that $r_{\sigma(k)}>t_{\sigma(k)}$ and $r_{\mu(k)}>t_{\mu(k)}$. Since $s\in F(u)$, $u_j(t_{\mu(k)},\mu(k))\leq u_j(t_{\sigma(k)},\sigma(k))$. Since $z\in F(u)$, $u_j(r_{\sigma(k)},\sigma(k))\leq u_j(r_{\mu(k)},\mu(k))$. Thus,
\begin{equation}\label{Eq:step1-1}
u_j(t_{\mu(k)},\mu(k))-u_j(r_{\mu(k)},\mu(k))\leq u_j(t_{\sigma(k)},\sigma(k))-u_j(r_{\sigma(k)},\sigma(k)).
\end{equation}
Now,
\begin{equation}\label{Eq:step1-2}\begin{array}{rl}u_j(t_{\mu(k)},\mu(k))-u_j(r_{\mu(k)},\mu(k))&=r_{\mu(k)}-t_{\mu(k)}\\
&-\rho_j\left(\max\{0,t_{\mu(k)}-b_j\}-\max\{0,r_{\mu(k)}-b_j\}\right)\\
&\geq r_{\mu(k)}-t_{\mu(k)},\end{array}\end{equation}
where the last inequality follows from $r_{\mu(k)}\geq t_{\mu(k)}$. Let $\bar{\rho}=\sup{R}$. Similarly,
\begin{equation}\label{Eq:step1-3}\begin{array}{rl}u_j(t_{\sigma(k)},\sigma(k))-u_j(r_{\sigma(k)},\sigma(k))&=r_{\sigma(k)}-t_{\sigma(k)}\\
&+\rho_i\left(\max\{0,r_{\sigma(k)}-b_j\}-\max\{0,t_{\sigma(k)}-b_j\}\right)\\
&\leq (1+\rho_j)(r_{\sigma(k)}-t_{\sigma(k)})\\
&\leq (1+\bar{\rho})(r_{\sigma(k)}-t_{\sigma(k)}),\end{array}\end{equation}
where the first inequality follows because since $t_{\sigma(k)}<r_{\sigma(k)}$, $\max\{0,r_{\sigma(k)}-b_j\}-\max\{0,t_{\sigma(k)}-b_j\}$ is equal to $r_{\sigma(k)}-t_{\sigma(k)}$ when $b_j\leq t_{\sigma(k)}$, equal to $r_{\mu(k)}-b_j$ when $t_{\sigma(k)}<b_j\leq r_{\sigma(k)}$, and equal to zero when $r_{\sigma(k)}<b_j$. Putting together equations (\ref{Eq:step1-1}), (\ref{Eq:step1-2}), and (\ref{Eq:step1-3}), we have that
\begin{equation}\label{Eq:flow-moneyr-3}
r_{\mu(k)}-t_{\mu(k)}\leq (1+\overline{\rho})(r_{\sigma(k)}-t_{\sigma(k)}).
\end{equation}
\medskip
\textit{\textbf{Step 2}: For each $(j,i)\in\Gamma(u,s)$, $r_{\mu(j)}-t_{\mu(j)}\leq \max\{0,(1+\overline{\rho})(r_{\sigma(i)}-t_{\sigma(i)})\}$.}
Suppose first that $r_{\sigma(i)}\leq t_{\sigma(i)}$. By \citet[Lemma 5.7,][]{Velez-2017-Survey}, $r_{\mu(i)}\leq t_{\mu(i)}$. Thus, $r_{\mu(j)}-t_{\mu(j)}\leq \max\{0,(1+\overline{\rho})(r_{\sigma(i)}-t_{\sigma(i)})\}$. We can suppose then that $r_{\sigma(i)}>t_{\sigma(i)}$ and $r_{\mu(i)}>t_{\mu(i)}$. Let $x\in\R$ be such that $u_j(x,{\sigma(i)})=u_j(z_j)$. Since $u_j(z_j)<u_j(s_j)=u_j(t_{\sigma(i)},{\sigma(i)})$ and preferences are money-monotone, $x>t_{\sigma(i)}$. Since $z\in F(u)$, $u_j(x,{\sigma(i)})=u_j(z_j)\geq u_j(r_{\sigma(i)},{\sigma(i)})$. Thus, $r_{\sigma(i)}\geq x>t_{\sigma(i)}$. Thus, $x-t_{\sigma(i)}\leq r_{\sigma(i)}-t_{\sigma(i)}$. Since $s\in F(u)$, $u_i(t_{\mu(j)},\mu(j))\leq u_j(t_{\sigma(i)},{\sigma(i)})$. Since $u_j(r_{\mu(j)},\mu(j))=u_j(z_j)=u_j(x,{\sigma(i)})$,
\begin{equation}\label{Eq:flow-moneyr-2}
\begin{array}{rl}r_{\mu(j)}-t_{\mu(j)}&\leq u_i(t_{\mu(j)},\mu(j))-u_j(r_{\mu(j)},\mu(j))\\
&\leq u_j(t_{\sigma(i)},{\sigma(i)})-u_j(x,{\sigma(i)})\\
&\leq (1+\overline{\rho})(r_{\sigma(i)}-t_{\sigma(i)}),
\end{array}
\end{equation}
where the first and last inequalities follow from the argument leading to equations (\ref{Eq:step1-2}) and (\ref{Eq:step1-3}).

\medskip
\textit{\textbf{Step 3}: For each  $(j,i)\in\Gamma(u,s)$, such that $r_{\sigma(i)}>t_{\sigma(i)}$, $r_{\sigma(j)}-t_{\sigma(j)}\leq (1+\overline{\rho})^{n} (r_{\sigma(i)}-t_{\sigma(i)})$.} By Step 2, $r_{\mu(j)}-t_{\mu(j)}\leq (1+\overline{\rho})(r_{\sigma(i)}-t_{\sigma(i)})$.  Let $j_1\in N$ be such that $\sigma(j_1)=\mu(j)$. Thus, $r_{\sigma(j_1)}-t_{\sigma(j_1)}\leq (1+\overline{\rho})(r_{\sigma(i)}-t_{\sigma(i)})$. Thus, if $j_1=j$, our step is proved. Suppose that $j_1\neq j$. By Step 1, $r_{\mu(j_1)}-t_{\mu(j_1)}\leq \max\{0,(1+\bar\rho)(r_{\sigma(j_1)}-t_{\sigma(j_1)})\}$. Thus,
\[r_{\mu(j_1)}-t_{\mu(j_1)}\leq (1+\overline{\rho})^2(r_{\sigma(i)}-t_{\sigma(i)}).\]
Suppose that we have found $\{j_1,...,j_m\}\subseteq N\setminus\{j\}$, different agents such that $\sigma(j_1)=\mu(j)$, $\sigma(j_2)=\mu(j_1)$,..., $\sigma(j_m)=\mu(j_{m-1})$; and $r_{\mu(j_m)}-t_{\mu(j_m)}\leq (1+\overline{\rho})^{m+1}(r_{\sigma(i)}-t_{\sigma(i)})$. Let $k\in N$ such that $\sigma(k)=\mu(j_m)$. Then, $r_{\sigma(k)}-t_{\sigma(k)}\leq (1+\overline{\rho})^{m+1}(r_{\sigma(i)}-t_{\sigma(i)})$. We claim that $k\not\in\{j_1,...,j_m\}$. Suppose by contradiction that $k=j_l$ for some $l\in\{1,...,m\}$. If $l=1$, $\sigma(k)=\mu(j)$. Thus, $j_m=j$. A contradiction. If $l>1$, $\sigma(k)=\mu(j_{l-1})$. Thus, $j_m=j_{l-1}$. This contradicts $\{j_1,...,j_m\}$ are all different agents. Thus, either $k=j$ or $k\in N\setminus\{j,j_1,...,j_m\}$. In the former case,  $r_{\sigma(j)}-t_{\sigma(j)}\leq (1+\overline{\rho})^{m+1}(r_{\sigma(i)}-t_{\sigma(i)})$, which proves our step. In the later case let $j_{m+1}=k$. Recall that $\sigma(j_{m+1})=\mu(j_m)$. Thus, $\{j_1,...,j_{m+1}\}\subseteq N\setminus\{j\}$ are all different agents. By Step 1, $r_{\mu(j_{m+1})}-t_{\mu(j_{m+1})}\leq \max\{0,(1+\bar\rho)(r_{\sigma(j_{m+1})}-t_{\sigma(j_{m+1})})\}$. Thus, $r_{\mu(j_{m+1})}-t_{\mu(j_{m+1})}\leq \max\{0,(1+\bar\rho)(r_{\mu(j_{m})}-t_{\mu(j_{m})})\}$. Thus,
\[r_{\mu(j_{m+1})}-t_{\mu(j_{m+1})}\leq (1+\overline{\rho})^{m+2}(r_{\sigma(i)}-t_{\sigma(i)}).\]
Since $N$ is finite, by repeating the preceding argument we eventually find that $r_{\sigma(j)}-t_{\sigma(j)}\leq (1+\overline{\rho})^{n}(r_{\sigma(i)}-t_{\sigma(i)})$.

\medskip
\textit{\textbf{Step 4}: Concludes}. Suppose that  $j\rightarrow_{u,s}i$. Let $(j,k_1),(k_1,k_2),...,(k_m,i)$ be the path in $\Gamma(u,s)$ that defines $j\rightarrow_{u,s}i$. If  $r_{\sigma(k_l)}\leq t_{\sigma(k_l)}$ for some $l=1,..,m$, then $(k,j)\in\Gamma(u,s)$ implies, $r_{\sigma(j)}\leq t_{\sigma(j)}$ \citep[Lemma 5.7,][]{Velez-2017-Survey}. Thus, we can suppose without loss of generality that for each $l=1,...,m$, $r_{\sigma(k_l)}> t_{\sigma(k_l)}$. By Step 3, we have that
\[r_{\sigma(j)}-t_{\sigma(j)}\leq (1+\bar\rho)^{n\times n}(r_{\sigma(i)}-t_{\sigma(i)}).\]
\end{proof}

\begin{definition}
For each $M\subseteq N$, $C\subseteq A$, $u\in\Ucal(C)^N$, $l\in\R$, and $a\in C$ let
\[F^a(M,C,u,l)\equiv\argmin_{(r,\mu)\in F(M,C,u,l)} r_a.\]
\end{definition}

\begin{lemma}\label{lem:kappa}There is a constant $\theta>0$ such that for each $M\subseteq N$, $C\subseteq A$ such that $|C|=|N|$, $u\in\Bcal(R,C)^N$, $a\in A$, $(t,\sigma)\in F^a(M,C,u,l)$, $\varepsilon>0$, and $(r,\mu)\in F^a(M,C,u,l+\varepsilon)$, $r_a-t_a\geq \theta \varepsilon$.
\end{lemma}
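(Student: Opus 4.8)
The plan is to reduce the statement to a uniform lower bound on how much the maximal price of a single room $a$ must increase when the aggregate rent increases by $\varepsilon$, over the finite combinatorial data that parametrize a budget-constrained economy. First I would fix $a\in A$ and compare $(t,\sigma)\in F^a(M,C,u,l)$ with $(r,\mu)\in F^a(M,C,u,l+\varepsilon)$. Since both allocations are envy-free and, by \citet[Lemma 3]{A-D-G-1991-Eca}, welfare is assignment-independent within $F$, I may assume $\sigma=\mu$ (passing to the common assignment changes neither price vector). Then, by the monotonicity/uniqueness results of \citet{A-D-G-1991-Eca} and \citet{Velez-2016-Rent} used repeatedly in the paper, $r_b\le t_b$ for every $b\in C$ fails to be automatic only because we are on the $F^a$ selection rather than $\Rcal$; so I would instead argue directly with the no-envy inequalities. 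The key point is that $r_a - t_a = \varepsilon + \sum_{b\ne a}(t_b - r_b)$, so it suffices to bound $\sum_{b\ne a}(t_b-r_b)$ from above by $(1/\theta-1)(r_a-t_a)$, equivalently to show that each $r_b$ does not fall too fast relative to the rise in $r_a$.

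The mechanism for that control is the chain of envy-free (tight) constraints: for the maximal-$r_a$ selection, there is a path in the envy-free graph $\Gamma(u,(r,\mu))$ from some room back to $a$ (this is exactly the characterization of $F^a$ via \citet[Prop.~5.9]{Velez-2017-Survey} used in the proof of Theorem~\ref{Thm:analysis-alg-vel-2}), and along each edge of that path the price change of consecutive rooms is linked by the local slopes $\lambda_{ib}(u,r)$. In domain $\Bcal$ with budget-violation indices in $\{\rho_1,\dots,\rho_k\}$, every local slope is one of the finitely many values $\{1\}\cup\{1+\rho_t:t\le k\}$, all lying in $[1,1+\overline\rho]$. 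Hence along a path of length at most $n-1$ the ratio of price increments between any two rooms is bounded, in both directions, by $(1+\overline\rho)^{n-1}$. Combining this with the accounting identity above, each $t_b - r_b$ is at most $(1+\overline\rho)^{n-1}(r_a-t_a)$, and since there are at most $n-1$ rooms other than $a$, we get $r_a-t_a \ge \varepsilon / \bigl(1+(n-1)(1+\overline\rho)^{n-1}\bigr)$, giving $\theta \equiv 1/\bigl(1+(n-1)(1+\overline\rho)^{n-1}\bigr)$, which depends only on $n$ and $\overline\rho$, not on $u$, $l$, or $\varepsilon$.

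The main obstacle I anticipate is making the path argument rigorous when the two allocations $(t,\sigma)$ and $(r,\mu)$ do not move monotonically in every coordinate, so that signs of $t_b-r_b$ are not all known a priori; the clean way around this is to interpolate through the one-parameter family $\delta\mapsto(r^\delta,\mu)$ of $F^a$-selections (well defined and continuous by \citet{A-D-G-1991-Eca,Velez-2016-Rent}, exactly as exploited in Lemma~\ref{Lm:properties-varep()}), establish the differential inequality $\tfrac{d}{d\delta} r^\delta_a \ge \theta$ piecewise — on each interval where the set of tight envy constraints and the budget regime $B^u(r^\delta)$ are constant the map is affine and the slope bound above applies verbatim — and then integrate. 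A secondary technical nuisance is a possible mismatch between $\sigma$ and $\mu$, handled as above by \citet[Lemma 3]{A-D-G-1991-Eca}; and one must check that the finitely many breakpoints (slope changes and budget-regime changes) are finite in number, which follows from the piecewise-linear structure of $\Bcal$ already noted in the body (at most $2n^2$ pieces per agent).
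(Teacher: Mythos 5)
Your overall strategy --- chain the price increments along tight no-envy constraints toward the room $a$ whose price is maximized, using that in $\Bcal$ all money-slopes lie in $[1,1+\overline\rho]$, and close with the accounting identity $\sum_b(r_b-t_b)=\varepsilon$ --- is the same as the paper's. But there is a genuine gap at the very first step: the reduction ``I may assume $\sigma=\mu$'' is not justified. \citet[Lemma 3]{A-D-G-1991-Eca} lets you swap assignments between two envy-free allocations \emph{for the same total rent}; it does not say that an assignment supporting the max-$r_a$ price vector at rent $l+\varepsilon$ also supports the max-$r_a$ price vector at rent $l$. Indeed, the whole reason the paper's algorithms must recompute maximum-weight matchings as rent changes is that the set of compatible assignments does change. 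Handling this mismatch is precisely the nontrivial content of the paper's proof: its Step~1 relates $r_{\mu(j)}-t_{\mu(j)}$ to $r_{\sigma(j)}-t_{\sigma(j)}$ (same agent, two different rooms) by combining no-envy at both allocations with the two-sided slope bound; Step~3 then chains through the permutation $\sigma^{-1}\circ\mu$, which is why the paper ends up with a factor $(1+\overline\rho)^{n\times n}$ and $\theta=1/\bigl(n(1+\overline\rho)^{n^2}\bigr)$ rather than your $(1+\overline\rho)^{n-1}$. (Your sign slip --- bounding $\sum_{b\neq a}(t_b-r_b)$, which is negative by \citet[Theorem 4]{A-D-G-1991-Eca}, rather than $\sum_{b\neq a}(r_b-t_b)$ --- is cosmetic.)

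Your fallback, interpolating through the one-parameter family of $F^a$-selections and proving a piecewise differential inequality $\dot r_a\geq\theta$, is a genuinely different route from the paper's direct two-point comparison, and I believe it can be made to work: on each interval where the supporting assignment, the tight-constraint graph, and the budget regime are constant, the family is affine and the single-assignment slope chaining applies, giving a better constant. But as written it is only a sketch; you would still need to establish that the breakpoints are finitely many, that the assignment is constant between them, and that $\delta\mapsto r^\delta$ is continuous across them (the paper proves the needed continuity and monotonicity facts via \citet{Velez-2016-Rent}, but you would have to assemble them explicitly). As submitted, the proof is incomplete because its main reduction fails and its backup plan is not carried out.
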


\begin{proof}
Let $i\in N$ be such that $\sigma(i)=a$. By \citet[Proposition 1,][]{Velez-2011-JET}, for each $j\in N$, $j\rightarrow_{u,s}i$. By Lemma~\ref{lem:kappa1}, there exists $\eta>0$ that depends only on $n$ and $\bar \rho\equiv\sup R$, such that $r_{\sigma(j)}-t_{\sigma(j)}\leq \eta(r_{\sigma(i)}-t_{\sigma(i)})$. Thus,
\[\varepsilon=\sum_{j\in N}r_{\sigma(j)}-t_{\sigma(j)}\leq n\eta(r_{a}-t_{a}),\]
and for $\theta=n\eta$, $r_a-t_a\geq \theta\varepsilon$.\end{proof}

\begin{proof}[\textit{Proof of Theorem~\ref{lemma-Q-condition-LimitE}}]
Let $v\in\Bcal(R)^N$, $z\equiv(r,\mu)\in F(v)$, $i\in N$, and $u_i\in \Ucal$. Let $j\in N$. Suppose that $u_i(z_j)>u_i(z_i)$. Let $y$ be such that $u_i(r_{\mu(j)}-y,\mu(j))=u_i(z_i)$. Since preferences are continuous and satisfy money-monotonicity there is a unique such a $y$ and $y>0$.  There are two cases.

\textbf{Case 1}: $j\rightarrow_{v,z}i$.  Then, $j\rightarrow_{v_{-i},u_i,z}i$. Let $u'_i\in\Ucal$ be indifferent among all bundles in $z$. Thus, there is an allocation in $F(v_{-i},u_i')$, obtained by reshuffling along the chain that defines $j\rightarrow_{v_{-i},u_i,z}i$, at which agent $i$ receives $z_j$. By Lemma~\ref{Lemma-Manip-zinFu-i}, for each $\omega_2\in(0,1)$ there is $v'_i\in \Qcal$ such that for each $s\equiv(t,\sigma)\in F(v_{-i},v'_i)$, $\sigma(i)=\mu(j)$ and $u_i(s_i)\geq u_i\left(r_{\mu(j)}+\omega_2y,\mu(j)\right)$.

\textbf{Case 2}: It is not the case that $j\rightarrow_{v,z}i$. Let $x>0$ and $u'_i\in\Ucal$ a utility function that is indifferent between $z_i$ and for each $k\in N\setminus\{i\}$, $(r_{\mu(k)}-x,\mu(k))$. By choosing $x$ sufficiently large we can guarantee that for each $k\in N\setminus\{i\}$, if $u_i'(t_{\mu(k)},\mu(k))\geq u_i'(z_i)$, then
\begin{equation}u_i(t_{\mu(k)},\mu(k))\geq u_i(z_j).\label{Eq:Q-cond-leq1}\end{equation}
Let $u'\equiv(v_{-i},u_i')$. Since $z_i$ is the preferred bundle of $u_i'$ among all bundles at $z$ and $z\in F(v)$, $z\in F(u')$. Let $M\equiv\{k\in N:k\rightarrow_{u',z}i\}$ and $L=N\setminus M$. Then $i\in M$ and for each $k\in M$ and each $h\in N\setminus M$, $u_h'(z_h)>u_h'(z_k)$. Since it is not the case that $j\rightarrow_{v,z}i$, it is not the case that $j\rightarrow_{u',z}i$. Thus, $j\in L\neq\emptyset$. Let $\varphi$ be a function that assigns to each $l\in\R$ an allocation $\varphi(l)\in F^{\mu(i)}(M,\mu(M),u_M',l)$. For each $b\in \mu(M)$, let $r_b(l)$ be the rent payment of the agent who receives room~$b$ at~$\varphi(l)$. Then, by \citet[Proposition 2,][]{Velez-2016-Rent}, $r_b(\cdot)$ is a continuous strictly increasing function and $l\mapsto u_i'(\varphi(l)_i)$ is a continuous strictly decreasing function. By \citet[Corollary 1 and Proposition 2,][]{Velez-2016-Rent}, there is a function $\psi$ that assigns to each $l\in\R$ an allocation  $\psi(l)\in F(L,\mu(L),u_{L},l)$ such that (i) $\psi(\sum_{a\in\mu(L)}r_a)=z_L$, (ii) if for each $b\in \mu(L)$ and each $x\in\R$, $r_b(x)$ is the rent payment of the agent who receives room~$b$ at~$\psi(x)$, $r_b$ is a continuous strictly increasing function; and (iii) for each $i\in L$, $x\mapsto u_i(\psi(x)_i)$ is a continuous strictly decreasing function. Let $l_M\equiv \sum_{a\in\mu(M)}r_a$ and $l_L\equiv \sum_{a\in\mu(L)}r_a$. Consider the set
\[D\equiv\left\{\delta\in\R:\forall k\in M,\forall j\in L,u_j'\left(\psi\left(L_L+\delta\right)_j\right)\geq u_j'\left(\varphi\left(l_M-\delta\right)_j\right)\right\}.\]
Recall that for each $k\in M$ and each $h\in L$, $u_h'(z_h)>u_h'(z_k)$. Thus, $0\in D$ and all the inequalities that define $D$ hold strictly for $\delta=0$. Since for each  $b\in \mu(M)$, the function $r_b(\cdot)$ is continuous and for each $k\in L$, the function $l\mapsto u_k'(\psi(l)_k)$ is also continuous, there is $\delta>0$ in $D$. Since for each $k\in M$, $l\mapsto u_k'(\varphi(l)_k)$ is increasing, $\varphi(l_M-\delta)\in F(M,\mu(M),u_M',l_M-\delta)$, and $\psi(l_L+\delta)\in F(L,\mu(L),u_{L},l_L+\delta)$, for each $\delta\in D$, $(\varphi(l_M-\delta),\psi(l_L+\delta))\in F(u')$. Since $F(u')$ is a compact set, $D$ is bounded above. Let $\delta_1$ be the supremum of $D$. By the same argument above, if all inequalities that define $D$ are strict at $\delta_1$, there is $\delta>\delta_1$ in $\delta$. Thus, $\delta_1$ is the maximum of $D$. Let $z^1\equiv(r^1,\mu^1)\equiv(\varphi(l_M-\delta_1),\psi(l_L+\delta_1))\in F(u')$. Thus, there is $h\in L$ and $k\in M$ such that $u_h'(z^1_h)=u_h'(z^1_k)$. Let $i_1$ be the agent who receives $\mu(i)$ at $z^1$. Since $z^1_M\in F^{\mu(i)}(M,\mu(M),u_M',l_M)$, by \citet[Proposition 1,][]{Velez-2011-JET}, for each $k\in M$, $k\rightarrow_{u',z^1}i_1$. Thus, there is $h\in L$ such that  $l\rightarrow_{u',z^1}i_1$.
Let $M_1\equiv\{k\in N:k\rightarrow_{u',z^1}i_1\}$ and $L_1=N\setminus M_1$. Since $z^1=(\varphi(l_M-\delta_1),\psi(l_L+\delta_1))$, $M_1\supsetneq M$ and $\mu^1(M_1)\supsetneq \mu(M)$. By repeating this process $1\leq m\leq n$ times, one constructs $\{z^0\equiv(r^0,\mu^0),z^1\equiv(r^1,\mu^1),...,z^m\equiv(r^m,\mu^m)\}\subseteq F(u')$ and $\{\delta_1,...,\delta_m\}$ all positive amounts, such that: (i) $z^0=z$; (ii) for each $k\in\{1,...,m\}$, denote by $i_k$ the agent who receives object $\mu(i)$ at $z^k$ and by $j_k$ the agent who receives object $\mu(j)$ at $z^k$, $M_k\equiv\{h\in N:h\rightarrow_{u',z^k}i_k\}$, and $L_k\equiv N\setminus M_k$; then for each $k=1,...m$,  $M_{k-1}\subsetneq M_k$, $\mu(i)\in\mu^{k-1}(M_{k-1})\subsetneq \mu^k(M_k)$, $L_{k}\subsetneq L_{k-1}$; for each $k=0,...,m-1$, $\mu(j)\in \mu^k(L_{k})$; and $\mu(j)\in \mu^m(M_m)$; and (iii) for each $k=1,...,m$, $z^k\in F(u')$; for each $h\in M_{k-1}$, $r^{k-1}_{\mu(h)}>r^k_{\mu(h)}$; for each $h\in L_{k-1}$, $r^{k-1}_{\mu(h)}<r^k_{\mu(h)}$; and $z^k_{M_{k-1}}=\varphi(l_{M_{k-1}}-\delta_{k})$ where $l_{M_k-1}\equiv\sum_{a\in\mu^{k-1}(M_{k-1})}r^{k-1}_a$ and  $\varphi$ is a function that assigns to each $l$ an allocation $\varphi(l)\in F^{\mu(i)}(M_{k-1},\mu^{k-1}(M_{k-1}),u_{M_{k-1}},l)$. Let $\Delta\equiv\delta_1+\dots+\delta_m$. By Lemma~\ref{lem:kappa}, there exists $\theta>0$ that depends only on $n$ and $\bar\rho\equiv\sup R$, such that for each $k=1,...,m$, $r^{k-1}_{\mu(i)}-r^k_{\mu(i)}\geq \theta \delta_k$. Thus, $r^{0}_{\mu(i)}-r^{m}_{\mu(i)}\geq \theta \Delta$. Since for each $k=0,1,...,m-1$, and each $h\in L_{k}$, $r^{k}_{\mu(h)}<r^{k+1}_{\mu(h)}$, and $\mu(j)\in \mu^k(L_{k})$, we have that $r^m_{\mu(j)}-r^0_{\mu(j)}<\Delta$. There are two cases.

\textbf{Case 2.1.} $\mu^m(i)\neq\mu(i)$. Since $r^m_{\mu(i)}<r^0_{\mu(i)}$, and $\{z^0,z^m\}\subseteq F(v_{-i},u'_i)$, by \citet[Lemma 3,][]{A-D-G-1991-Eca}, $u_i'(z^m_i)>u_i'(z^0_i)=u_i'(z_i)$. By (\ref{Eq:Q-cond-leq1}), $u_i(z^m_i)>u_i(z^0_j)$. By Lemma~\ref{Lemma-Manip-zinFu-i}, for each $\eta>0$, there is $v'_i\in \Qcal$ such that for each $s\equiv(t,\sigma)\in F(v_{-i},v'_i)$, $\sigma(i)=\mu^m(i)$ and $u_i(s_i)\geq u_i\left(r_{\mu^m(j)}+\eta,\mu^m(j)\right)$. Thus, for each $\omega_2\in(0,1)$, there is $v'_i\in \Qcal$ such that for each $s\equiv(t,\sigma)\in F(v_{-i},v'_i)$, $u_i(s_i)\geq u_i\left(r_{\mu(j)},\mu(j)\right)>u_i\left(r_{\mu(j)}+\omega_2y,\mu(j)\right)$.

\textbf{Case 2.2.} $\mu^m(i)=\mu(i)$. Since $r^{0}_{\mu(i)}-r^{m}_{\mu(i)}\geq \theta \Delta$, $u_i(z^m_i)\geq u_i(r_{\mu(i)}-\theta\Delta,\mu(i))$. Since $r^m_{\mu(j)}-r^0_{\mu(j)}<\Delta$, $u_i(r^m_{\mu(j)},\mu(j))\geq u_i(r_{\mu(j)}+\Delta,\mu(j))$. Since $\mu(j)\in \mu^m(M_m)$, $j_m\rightarrow_{v_{-i},u_i',z^m}i$ where $\mu^m(j_m)=\mu(j)$. Suppose that $\Delta\leq y/2$. Since $z^m\in F(v_{-i},u_i')$, by Lemma~\ref{Lemma-Manip-zinFu-i2}, there is $v'_i\in \Qcal$ such that for each $s\equiv(t,\sigma)\in F(v_{-i},v'_i)$, $\sigma(i)=\mu(j)$ and $u_i(s_i)\geq u_i\left(r_{\mu(j)}+\Delta+y/4,\mu(j)\right)\geq u_i\left(r_j+3y/4,\mu(j)\right)$. Finally, suppose that $\Delta>y/2$.  Since $z^m\in F(v_{-i},u_i')$, by Lemma~\ref{Lemma-Manip-zinFu-i}, there is $v'_i\in \Qcal$ such that for each $s\equiv(t,\sigma)\in F(v_{-i},v'_i)$, $\sigma(i)=\mu(i)$ and $u_i(s_i)\geq u_i\left(r_{\mu(i)}-\theta\Delta+\theta\Delta/2,\mu(i)\right)= u_i\left(r_{\mu(i)}-\theta \Delta/2,\mu(i)\right)$ Thus,  for each $s\equiv(t,\sigma)\in F(v_{-i},v'_i)$, $u_i(s_i)\geq u_i\left(r_{\mu(i)}-\theta y/2,\mu(i)\right)$.

Summarizing, in all cases, there is $v'_i\in \Qcal$ such that for each $s\equiv(t,\sigma)\in F(v_{-i},v'_i)$,
\[u_i(s_i)\geq \min\left\{u_i\left(r_{\mu(i)}-\theta y/2,\mu(i)\right),u_i\left(r_{\mu(j)}+3y/4,\mu(j)\right)\right\}.\]
\end{proof}

\begin{proof}[\textit{Proof of Lemma~\ref{LM:goTHM2}}]Let $z\equiv(r,\mu)$ be a limit Nash equilibrium of $(N,\Dcal^N,g,u)$. We claim that $z\in F(u)$. Suppose by contradiction that there are $\{i,j\}\subseteq N$, such that $u_i(z_j)>u_i(z_i)$. Let $\delta>0$ be such that $u_i(r_j+\delta,\mu(j))=u_i(z_i)$. Let $(v^\varepsilon,x^\varepsilon)$ be a sequence of $\varepsilon$-equilibria of $(N,\Dcal^N,g,u)$, whose respective outcomes are $z^\varepsilon\equiv(r^\varepsilon,\mu^\varepsilon)\in g(v^\varepsilon)\in F(v^\varepsilon)$, such that as~$\varepsilon$ vanishes, $z^\varepsilon\rightarrow z$. We can suppose without loss of generality that for each~$\varepsilon$, $\mu^\varepsilon=\mu$. Let $\delta^\varepsilon\in\R$ be such that $u_i(r^\varepsilon_{\mu(j)}-\delta^\varepsilon, \mu(j))=u_i(z^\varepsilon_i)$. Let $\{\omega_1,\omega_2\}\subseteq(0,1)$  be the coefficients in Theorem~\ref{lemma-Q-condition-LimitE}. Since preferences are continuous and as~$\varepsilon$ vanishes, $z^\varepsilon\rightarrow z$, there is $\eta>0$ such that for each $\varepsilon<\eta$, $\delta+(1/\omega_2-1)\delta/2>\delta^\varepsilon\geq \delta/2>0$, $r^\varepsilon_{\mu(i)}\leq r_{\mu(i)}+\frac{\omega_1}{4}\delta$, and $r^\varepsilon_{\mu(j)}< r_{\mu(j)}+\frac{1-\omega_2}{4}$. Since $\Dcal\subseteq \Bcal$, by Theorem~\ref{lemma-Q-condition-LimitE} there is $v'_i\in\Qcal\subseteq \Dcal$ such that, for each $s\in F(v_{-i}^\varepsilon,v'_i)$,
\[u_i(s_i)\geq \min\left\{u_i\left(r_{\mu(i)}^\varepsilon-\omega_1\delta^\varepsilon,\mu(i)\right),u_i\left(r_{\mu(j)}^\varepsilon+
\omega_2\delta^\varepsilon,\mu(j)\right)\right\}.\]
Since $\delta+(1/\omega_2-1)\delta/2>\delta^\varepsilon\geq \delta/2>0$,
\[u_i(s_i)\geq \min\left\{u_i\left(r_{\mu(i)}^\varepsilon-\frac{\omega_1}{2}\delta,\mu(i)\right),u_i\left(r_{\mu(j)}^\varepsilon+\frac{1+\omega_2}{2}\delta,\mu(j)\right)\right\}.\]
Since $r^\varepsilon_{\mu(i)}\leq r_{\mu(i)}+\frac{\omega_1}{4}\delta$ and $r^\varepsilon_{\mu(j)}< r_{\mu(j)}+\frac{1-\omega_2}{4}$,
\[u_i(s_i)\geq \min\left\{u_i\left(r_{\mu(i)}-\frac{\omega_1}{4}\delta,\mu(i)\right),u_i\left(r_{\mu(j)}^\varepsilon+\frac{3+\omega_2}{4}\delta,\mu(j)\right)\right\}.\]
Let
\[\bar\varepsilon\equiv\min\left\{u_i\left(r_{\mu(i)}-\frac{\omega_1}{4}\delta,\mu(i)\right),u_i\left(r_{\mu(j)}^\varepsilon+\frac{3+\omega_2}{4}\delta,\mu(j)\right)\right\}
-
u_i(z_i).\]
Since $\omega_1>0$ and $\omega_2<1$, $\bar\varepsilon>0$. Thus, for each $\varepsilon<\eta$, $u_i(g(v^\varepsilon_{-i},v_i'))\geq u_i(z_i)+\bar\varepsilon$. Since as~$\varepsilon$ vanishes, $z^\varepsilon\rightarrow z$, we can further select $\varepsilon<\bar\varepsilon/2$  such that $u_i(z_i^\varepsilon)-u_i(z_i)<\bar\varepsilon/2$. Thus,  $u_i(g(v^\varepsilon_{-i},v_i'))\geq u_i(z_i)+\bar\varepsilon> u_i(z^\varepsilon_i)+\bar\varepsilon/2>u_i(z^\varepsilon_i)+\varepsilon$. Thus, $v^\varepsilon$ is not an $\varepsilon$-equilibrium of $(N,\Dcal^N,g,u)$. This is a contradiction.
\end{proof}

\begin{proof}[\textit{Proof of Lemma~\ref{LMem:converse-Th2}}]
Let $u\in\Ucal^N$ and $z\equiv(r,\mu)\in F(u)$. Let $v^\varepsilon\in\Qcal^N$ be the profile such that for each $i\in N$ and each $j\in N\setminus\{i\}$, $v^\varepsilon_i(r_{\mu(i)}+\varepsilon/(n-1),\mu(i))=v^\varepsilon_i(r_{\mu(j)}-\varepsilon/(n-1)^2,\mu(j))$. Thus, each agent $i$ assigns a value to the object they consume at $z$ that is $\varepsilon(1/(n-1)+1/(n-1)^2)$ greater than the value of each other object.  We claim that $v^\varepsilon$ is an $\varepsilon$-equilibrium of $(N,\Dcal^N,g,u)$. Let $s\in F(v^\varepsilon)$, then $s$ is Pareto efficient. Thus, each agent receives $\mu(i)$ at $s$. Thus, $s=(t,\mu)$ for some $t\in R^A$. We claim that for each $i\in N$, $t_{\mu(i)}\leq r_{\mu(i)}+\varepsilon/(n-1)$. Suppose that there is $i\in N$ such that $t_{\mu(i)}>r_{\mu(i)}+\varepsilon/(n-1)$. Since $\sum_{j\in N}t_{\mu(j)}=\sum_{j\in N}r_{\mu(j)}$, there is $j\in N$ such that $t_{\mu(j)}< r_{\mu(j)}-\varepsilon/(n-1)^2$. Thus, $u_i(s_j)>u_i(s_i)$ and $s\not\in F(v^\varepsilon)$. This is a contradiction. Since for each $i\in N$, $t_{\mu(i)}\leq r_{\mu(i)}+\varepsilon/(n-1)$, then for each $i\in N$, $t_{\mu(i)}\geq r_{\mu(i)}-\varepsilon$. Let $i\in N$. By \citet[Lemma 3,][]{Fujinaka-Wakayama-2015-JET}, for each $s\in F^i(v^\varepsilon_{-i},u_i)$, $t_{\mu(i)}\geq r_{\mu(i)}-\varepsilon$. Thus, for each $s\in F^i(v^\varepsilon_{-i},u_i)$, $u_i(s_i)\leq u_i(z_i)+\varepsilon$. By \citet[Statement 1, Theorem 1,][]{Fujinaka-Wakayama-2015-JET}, for each $v_i'\in\Ucal$ and each $s\in F(v^\varepsilon_{-i},v'_i)$, $u_i(s_i)\leq u_i(z_i)+\varepsilon$. Thus, for each $v_i'\in\Ucal$  $s\equiv g(v^\varepsilon_{-i},v'_{i})\in F(v^\varepsilon_{-i},v'_i)$ is such that $u_i(s_i)\leq u_i(z_i)+\varepsilon$. Let $(t^\varepsilon,\mu)=g(v^\varepsilon)$. Then, for each $i\in N$, $r_i-\varepsilon\leq t_i^\varepsilon\leq r_i+\varepsilon/(n-1)$. Thus, as $\varepsilon$ vanishes, $(t^\varepsilon,\mu)\rightarrow z$. Thus, $z$ is a limit Nash equilibrium of  $(N,\Dcal^N,g,u)$.
\end{proof}

\bibliography{ref-expressive-domains}

\begin{thebibliography}{23}
\expandafter\ifx\csname natexlab\endcsname\relax\def\natexlab#1{#1}\fi
\expandafter\ifx\csname url\endcsname\relax
  \def\url#1{\texttt{#1}}\fi
\expandafter\ifx\csname urlprefix\endcsname\relax\def\urlprefix{URL }\fi

\bibitem[{Alkan et~al.(1991)Alkan, Demange, and Gale}]{A-D-G-1991-Eca}
Alkan, A., Demange, G., Gale, D., 1991. Fair allocation of indivisible goods
  and criteria of justice. Econometrica 59~(4), 1023--1039.
\newline\urlprefix\url{http://www.jstor.org/stable/2938172}

\bibitem[{Andersson et~al.(2014{\natexlab{a}})Andersson, Ehlers, and
  Svensson}]{Andersson-Ehlers-Svensson-2014-TE}
Andersson, T., Ehlers, L., Svensson, L.-G., 2014{\natexlab{a}}. Budget-balance,
  fairness and minimal manipulability. Theor Econ 9~(3), 753--777.
\newline\urlprefix\url{http://econtheory.org/ojs/index.php/te/article/view/20140753/0}

\bibitem[{Andersson et~al.(2014{\natexlab{b}})Andersson, Ehlers, and
  Svensson}]{Andersson-Ehlers-Svensson-2014-MSSc}
Andersson, T., Ehlers, L., Svensson, L.-G., 2014{\natexlab{b}}. Least
  manipulable envy-free rules in economies with indivisibilities. Math Social
  Sc 69, 43--49.
\newline\urlprefix\url{http://dx.doi.org/10.1016/j.mathsocsci.2014.01.006}

\bibitem[{Arunachaleswaran et~al.(2019)Arunachaleswaran, Barman, and
  Rathi}]{Eshwar-et-al-2018-Arxiv}
Arunachaleswaran, E.~R., Barman, S., Rathi, N., 2019. Fully polynomial-time
  approximation schemes for fair rent division. In: Proceedings of the
  Thirtieth Annual ACM-SIAM Symposium on Discrete Algorithms. SODA '19. Society
  for Industrial and Applied Mathematics, Philadelphia, PA, USA, pp.
  1994--2013, open access at http://arxiv.org/abs/1807.04163.
\newline\urlprefix\url{http://dl.acm.org/citation.cfm?id=3310435.3310556}

\bibitem[{Cramton et~al.(1987)Cramton, Gibbons, and
  Klemperer}]{Cramton-et-al-1987-Eca}
Cramton, P., Gibbons, R., Klemperer, P., 1987. Dissolving a partnership
  efficiently. Econometrica 55~(3), pp. 615--632.
\newline\urlprefix\url{http://www.jstor.org/stable/1913602}

\bibitem[{Foley(1967)}]{Foley-1967-YEE}
Foley, D., 1967. Resource allocation and the public sector. Yale Economic
  Essays 7, 45--98.

\bibitem[{Fujinaka and Wakayama(2015)}]{Fujinaka-Wakayama-2015-JET}
Fujinaka, Y., Wakayama, T., 2015. Maximal manipulation of envy-free solutions
  in economies with indivisible goods and money. J Econ Theory 158, Part A, 165
  -- 185.
\newline\urlprefix\url{http://dx.doi.org/10.1016/j.jet.2015.03.014}

\bibitem[{Gal et~al.(2017)Gal, Mash, Procaccia, and Zick}]{Gal-et-al-2016-JACM}
Gal, Y., Mash, M., Procaccia, A.~D., Zick, Y., November 2017. Which is the
  fairest (rent division) of them all? Journal of the {ACM} 64~(6).
\newline\urlprefix\url{https://doi.org/10.1145/3131361}

\bibitem[{Goldman and Procaccia(2014)}]{GP14}
Goldman, J., Procaccia, A.~D., 2014. Spliddit: Unleashing fair division
  algorithms. SIGecom Exchanges 13~(2), 41--46.

\bibitem[{Kajackaite and Gneezy(2017)}]{KAJACKAITE-2017-GEB}
Kajackaite, A., Gneezy, U., 2017. Incentives and cheating. Games Econ Behavior
  102, 433 -- 444.
\newline\urlprefix\url{https://doi.org/10.1016/j.geb.2017.01.015}

\bibitem[{Moldovanu(2002)}]{Moldovanu2002-JITE}
Moldovanu, B., 2002. How to dissolve a partnership. Journal of Institutional
  and Theoretical Economics (JITE) / Zeitschrift f\"{u}r die gesamte
  Staatswissenschaft 158~(1), pp. 66--80.
\newline\urlprefix\url{http://www.jstor.org/stable/40753054}

\bibitem[{Nicol\'{o} and Velez(2017)}]{Nicolo-Velez-2016}
Nicol\'{o}, A., Velez, R.~A., 2017. Divide and compromise. Math Social Sc 90,
  100 -- 110.
\newline\urlprefix\url{http://www.sciencedirect.com/science/article/pii/S0165489617300690}

\bibitem[{Pathak and S\"{o}nmez(2008)}]{Pathak-Sonmez-AER-2008}
Pathak, P., S\"{o}nmez, T., 2008. Leveling the playing field: sincere and
  sophisticated players in the {B}oston mechanism. Amer. Econ. Rev. 8~(4),
  1636--1652.
\newline\urlprefix\url{http://dx.doi.org/10.1257/aer.98.4.1636}

\bibitem[{Procaccia et~al.(2018)Procaccia, Velez, and
  Yu}]{Procaccia-Velez-Yu-2018-AAAI}
Procaccia, A., Velez, R.~A., Yu, D., 2018. Fair rent division on a budget. In:
  Proceedings of the 2018, 32nd {AAAI} {C}onference on {A}rtificial
  {I}nteligence. AAAI.
\newline\urlprefix\url{https://aaai.org/ocs/index.php/AAAI/AAAI18/paper/view/16936/15807}

\bibitem[{Procaccia.(2019)}]{Procaccia-2019-FED}
Procaccia., A.~D., 2019. Axioms should explain solutions. In: Laslier, J.~F.,
  Moulin, H., Sanver, M.~R., Zwicker, W.~S. (Eds.), Future of Economic Design.
  Springer, Switzerland.

\bibitem[{Svensson(1983)}]{Svensson-1983-Eca}
Svensson, L.-G., 1983. Large indivisibles: an analysis with respect to price
  equilibrium and fairness. Econometrica 51, 939--954.
\newline\urlprefix\url{http://www.jstor.org/stable/1912044}

\bibitem[{Tadenuma and Thomson(1995{\natexlab{a}})}]{Tadenuma-Thomson-1995-GEB}
Tadenuma, K., Thomson, W., May 1995{\natexlab{a}}. Games of fair division.
  Games and Economic Behavior 9~(2), 191--204.
\newline\urlprefix\url{http://ideas.repec.org/a/eee/gamebe/v9y1995i2p191-204.html}

\bibitem[{Tadenuma and
  Thomson(1995{\natexlab{b}})}]{Tadenuma-and-Thomson-1995-TD}
Tadenuma, K., Thomson, W., 1995{\natexlab{b}}. Refinements of the no-envy
  solution in economies with indivisible goods. Theory and Decision 39,
  189--206.
\newline\urlprefix\url{http://dx.doi.org/10.1007/BF01078984}

\bibitem[{Velez(2011)}]{Velez-2011-JET}
Velez, R.~A., 2011. Are incentives against economic justice? J Econ Theory 146,
  326--345.
\newline\urlprefix\url{http://dx.doi.org/10.1016/j.jet.2010.10.005}

\bibitem[{Velez(2015)}]{Velez-2015-JET}
Velez, R.~A., 2015. Sincere and sophisticated players in an equal-income
  market. J Econ Theory 157~(0), 1114--1129.
\newline\urlprefix\url{http://dx.doi.org/10.1016/j.jet.2015.03.006}

\bibitem[{Velez(2017)}]{Velez-2016-Rent}
Velez, R.~A., 2017. Sharing an increase of the rent fairly. Soc Choice Welfare
  48~(1), 59--80.
\newline\urlprefix\url{https://doi.org/10.1007/s00355-016-1018-4}

\bibitem[{Velez(2018)}]{Velez-2017-Survey}
Velez, R.~A., October 2018. Equitable rent division. ACM Trans. Econ. Comput.
  (TEAC) 6~(2), {A}rticle 9, {F}irst circulated in {J}une 2017.
\newline\urlprefix\url{http://dx.doi.org/10.1145/3274528}

\bibitem[{Velez(2020)}]{Velez-2019-Complexity}
Velez, R.~A., 2020. A polynomial algorithm for maxmin and minmax envy-free rent
  division on a soft budget.
\newline\urlprefix\url{https://arxiv.org/abs/2002.02966}

\end{thebibliography}

\end{document}